\theoremstyle{plain}
  \newtheorem{theorem}{Theorem}[section]
  \newtheorem{proposition}[theorem]{Proposition}
  \newtheorem{lemma}[theorem]{Lemma}
\theoremstyle{definition}
  \newtheorem{assumption}[theorem]{Assumption}
\theoremstyle{remark}
\numberwithin{equation}{section}
\DeclareMathOperator{\Tr}{Tr}
 \DeclareMathOperator{\supp}{Supp}
\renewcommand{\Re}{\mathrm{Re}\, }
\newcommand\otimesal{\mathop{\hbox{\raise 1.6 ex
  \hbox{$\scriptscriptstyle\mathrm{al}$}
\kern -0.92 em \hbox{$\otimes$}}}}
\newcommand\oplusal{\mathop{\hbox{\raise 1.6 ex
  \hbox{$\scriptscriptstyle\mathrm{al}$}
\kern -0.92 em \hbox{$\oplus$}}}}
\newcommand\Gammal{\hbox{\raise 1.7 ex
\hbox{$\scriptscriptstyle\mathrm{al}$}\kern -0.50 em $\Gamma$}}
\renewcommand\i{\mathrm{i}}
\let\al=\alpha   \let\ep=\epsilon
  \let\ga=\gamma 
\let\ka=\kappa \let\la=\lambda \let\om=\omega
 \let\Ga=\Gamma \let\La=\Lambda \let\Om=\Omega
\newcommand{\caA}{{\mathcal A}}
\newcommand{\caE}{{\mathcal E}}
\newcommand{\caG}{{\mathcal G}}
\newcommand{\caI}{{\mathcal I}}
\newcommand{\caT}{{\mathcal T}}
\newcommand{\scrB}{{\mathscr B}}
\newcommand{\scrH}{{\mathscr H}}
\newcommand{\scrR}{{\mathscr R}}
\newcommand{\scrS}{{\mathscr S}}
\newcommand{\scrT}{{\mathscr T}}
\newcommand{\bbC}{{\mathbb C}}
\newcommand{\bbE}{{\mathbb E}}
\newcommand{\bbN}{{\mathbb N}}
\newcommand{\bbR}{{\mathbb R}}
\newcommand{\bbT}{{\mathbb T}}
\newcommand{\bbZ}{{\mathbb Z}}
\newcommand{\opunit}{\text{1}\kern-0.22em\text{l}}
\newcommand{\frt}{{\mathfrak t}}
\newcommand{\e}{{\mathrm e}}
\renewcommand{\d}{{\mathrm d}}
\newcommand{\sys}{{\mathrm S}}
\newcommand{\res}{{\mathrm R}}
\renewcommand{\sp}{\mathrm{sp}}
\newcommand{\Dom}{\mathrm{Dom}}
\newcommand{\beq}{ \begin{equation} }
\newcommand{\eeq}{ \end{equation} }
\newcommand{\bet}{ \begin{theorem} }
\newcommand{\eet}{ \end{theorem} }
\newcommand{\baq}{\begin{eqnarray}}
\newcommand{\eaq}{\end{eqnarray}}
\renewcommand{\supp}{\mathrm{Supp}}
\newcounter{qcounter}
 \newcounter{smallarabics}
\newenvironment{arabicenumerate}
{\begin{list}{{\normalfont\textrm{\arabic{smallarabics})}}}
  {\usecounter{smallarabics}\setlength{\itemindent}{0cm}
  \setlength{\leftmargin}{5ex}\setlength{\labelwidth}{4ex}
  \setlength{\topsep}{0.75\parsep}\setlength{\partopsep}{0ex}
   \setlength{\itemsep}{0ex}}}
{\end{list}}
\newcounter{smallroman}
\newcommand{\ben}{\begin{arabicenumerate}}
\newcommand{\een}{\end{arabicenumerate}}
\newcommand{\norm}{ \|}
\newcommand{\str}{ |}
\newcommand{\lat}{ \bbZ^d }
\newcommand{\tor}{ {\bbT^d}  }
\newcommand{\initialres}{\rho_\res^{\mathrm{ref}}}
\newcommand{\initialresfinite}{\rho_\res^{\mathrm{ref}, \La}}
\newcommand{\links}{\mathrm{Le}}
\newcommand{\rechts}{\mathrm{Ri}}
\newcommand{\adjoint}{\mathrm{ad}}
\newcommand{\ad}{\adjoint}
\newcommand{\dist}{\mathrm{dist}}
\newcommand{ \hook}{\mathrm{Hook}}
\newcommand{ \dt}{\tau}
\newcommand{\weird}{\#}
\newcommand{\poly}{\mathrm{Pol}}
\newcommand{\inter}{\mathrm{Int}}
\newcommand{\uw}{\underline{w}}
\newcommand{\ut}{\underline{t}}
\begin{document}

\begin{center}
\large{ \bf{Return to equilibrium' for weakly coupled quantum systems: a simple polymer expansion}}
 \\
\vspace{15pt} \normalsize

{\bf   W.  De Roeck\footnote{
email: {\tt
 w.deroeck@thphys.uni-heidelberg.de}}  }\\
\vspace{10pt} 
{\it   Institut f\"ur Theoretische Physik  \\ Universit\"at Heidelberg \\
Philosophenweg 19,  \\
D69120 Heidelberg,  Germany 
} \\

\vspace{15pt}

{\bf   A. Kupiainen\footnote{
email: {\tt    antti.kupiainen@helsinki.fi  }}  }\\
\vspace{10pt} 
{\it   Department of Mathematics \\
 University of Helsinki \\ 
P.O. Box 68, FIN-00014,  Finland 
} \\

\end{center}

\vspace{20pt} \footnotesize \noindent {\bf Abstract: }
Recently, several authors studied small quantum systems weakly coupled to free boson or fermion fields at positive temperature. All the rigorous approaches we are aware of employ complex deformations of Liouvillians or Mourre theory (the infinitesimal version of the former).  We present an approach based on polymer expansions of statistical mechanics.  Despite the fact that our approach is elementary, our results are  slightly sharper than those contained in the literature up to now.    We show that, whenever the small quantum system is known to admit a Markov approximation (Pauli master equation \emph{aka} Lindblad equation) in the weak coupling limit, and the Markov approximation is exponentially mixing,  then the  weakly coupled system approaches a unique invariant state that is perturbatively close to its Markov approximation.

%
%\vspace{5pt} \footnotesize \noindent {\bf KEY WORDS:}    \vspace{20pt}
%\normalsize

\section{Introduction}\label{sec: introduction}

\subsection{Motivation}
Quantum systems consisting of a small subsystem (say, an atom) and a large component (say, a heat bath) have received a lot of attention lately, sparked by the elegant results of \cite{lindenpopescushortwinter, lebowitzgoldsteinmastrodonatotumulka, reimannfoundations}.   The challenge in this problem is to prove that the subsystem thermalizes under influence of the heat bath, this property will be called 'return to equilibrium', or simply  'RTE' hereafter. The quoted works show that for typical such systems (in a precisely defined sense of typicality) the subsystem is close to equilibrium for most times. However, if one wants to study in
more detail the subsystem dynamics one needs to resort to concrete models. Indeed, such subsystem-reservoir models have been successfully and rigorously studied since the late $90$'s (we refer to \cite{jaksicpillet2,bachfrohlichreturn} for early results in this field) under the assumptions
\ben
\item that the heat bath consists of a free (and hence explicitly solvable) field. 
\item that the coupling between subsystem and heat bath is small compared to the energy scales of the subsystem
\een
Nevertheless, these results either impose rather strong assumptions on the form of the system-reservoir coupling, or  they become quite involved technically. 
The aim of the present paper is to develop an intuitive and simple approach for RTE in this case.   To explain our result, let us first recall that these systems were already studied in the 70's from the point of view of quantum master equations; B.\ Davies  \cite{davies1} pioneered the rigorous derivation of master equations in this framework, thus making precise earlier heuristic ideas of I.\ Prigogine and P.\ Van Hove.  The master equation is derived, under mild conditions on the form of the coupling, by scaling time $t$ as $t \sim \la^{-2}$ ($\la$ is the coupling strength) and taking $\la \to 0$. It exhibits all irreversible phenomena expected in such model systems and as such, it has inspired many researchers in open quantum systems.   However, it  does of course \textbf{not} yield information on the long time (longer than $\la^{-2}$) behaviour of the system, 

We prove  that, \textbf{if} the condition necessary for the derivation of the master equation is satisfied and the master equation is exponentially ergodic (exhibits exponentially fast return to equilibrium), \textbf{then} the system thermalizes in the long time limit, for small but nonzero coupling strength $\la$ (more generally, it reaches a steady state, since we do not assume that the heat bath(s) is(are) in equilibrium). Moreover, we give an explicit bound on the speed of convergence towards the steady state.  The necessary condition is the integrability in time of certain correlation functions of the free field. 
To our best knowledge, this condition is weaker than that of other RTE- results in the literature. 

\vskip 3mm
\noindent {\bf Acknowledgements}.  This work was done while $W.D.R$ was a post-doc at the university of Helsinki. We thank the European Research Council
and the Academy of Finland  for financial support.  W.D.R. is grateful to Kevin Schnelli for serious proofreading and for pointing out several errors in the manuscript.

\subsection{Setup}

Let $\scrH_\sys$ be a finite-dimensional Hilbert space (modeling the small system) with a Hamiltonian $H_\sys$ (a Hermitian matrix).  To describe the field that plays the role of reservoir, we first pick a finite, discrete hypercube  $\La=\La_L =  \bbZ^d \cap\,  ( -L, L]^d $  for some $L \in \bbN$ and we enclose the field in the volume $\La$ (we could as well choose $\La$ to be a box in $\bbR^d$).  Since we will mainly use the Fourier transform, we define the set of (quasi-)momenta $\La^* = (\pi\bbZ/L)^d \cap \, (-\pi, \pi]^d$.  The dynamics of one reservoir excitation is given by the one-particle dispersion relation $\omega^{\La}(q)$ and the Hamiltonian of the whole field in $\La$ is given by 
\beq
H^{\La}_\res:= \sum_{q \in \La^*} \omega^{\La}(q) a^*_q a_q
\eeq
acting on the bosonic (symmetric) Fock space $\scrH_\res^\La = \Ga(l^2(\La^*))$. 
Here $a_q^*, a_q$ are the creation/annihilation operators of a mode with (quasi-)\-momentum $q \in \La^*$, satisfying the canonical commutation relations $[a_q, a^*_{q'}] = \delta_{q,q'}$. 
The Hilbert space of the total system consisting of small system and field, is  $\scrH^{\La}= \scrH_\sys \otimes \scrH^{\La}_\res$, and we simply write  $H_\sys$ and  $H_\res^{\La}$ for the operators $H_{\sys} \otimes 1$ and $1 \otimes H^{\La}_\res$ acting on $\scrH^\La$.
 The coupling between field and the small system  
  is assumed to be linear in the creation and annihilation operators and it can hence be written in the form 
 \beq
H_{\inter}^{\La}:= \sum_{i \in \caI}  D_i \otimes \Phi(\phi^{\La}_i), \qquad    \Phi(\phi^{\La}_i):=    \sum_{q \in \La^*} \left(\phi^{\La}_i(q) a^*_q +\overline{\phi_i^{\La}(q)} a_q \right)   \label{def: interaction ham}
\eeq
where $D_i=D_i^{*}$ are self-adjoint operators on $\scrH_\sys$ and 
$\phi_i^{\La}$ are  functions (form factors) to be specified.  $\caI$ is a finite index set. 

The total Hamiltonian of the system is hence, with a coupling strength $\la \in \bbR$, 
\beq
H^{\La}_{\la}:= H_\sys +  H^{\La}_\res+  \la H_{\inter}^{\La}, \qquad  \textrm{on} \,\, \scrH^{\La}.
\eeq
A standard application of the Kato-Rellich theorem states that, if $(\om^{\La})^{-1/2} \phi_i^{\La} \in l^2(\La^*)$, then  $H^{\La}$ is self-adjoint on the domain of $H^{\La}_\res$. 

Initially, the field is in a Gaussian state that will be called the 'reference state':
The density matrix $\initialresfinite$ of this reference state is specified by the correlation functions of smeared field operators, i.e.\ by 
\beq  \label{def: correlation functions finite volume}
\Tr_\res \left[ \initialresfinite \Phi(\varphi_1)  \ldots \Phi(\varphi_n) \right], \qquad   \eeq
where $\varphi_i$ are functions on $\La^*$ and $\Tr_\res$ is the trace on $\scrH^{\La}_\res$.

Our main assumptions on this reference state  $\initialresfinite$ are that it is
 \ben
 \item stationary w.r.t.\  the decoupled dynamics, i.e. \beq
 \label{eq: stationarity of reservoir}
     \e^{-\i t H^{\La}_\res }  \initialresfinite  \e^{\i t H^{\La}_\res } = \initialresfinite
 \eeq
 \item gauge-invariant, i.e.\ all correlation functions \eqref{def: correlation functions finite volume} that involve an odd number of field operators, are zero.
 \item Gaussian (also called "quasifree"), i.e.\  the  higher correlation functions are related to the two-particle correlation function via the Gaussian relation
 \beq \label{eq: gaussian property}
\Tr_\res\left[\initialresfinite \Phi(\varphi_1) \ldots \Phi(\varphi_n) \right]  =   \sum_{\mathrm{pairings}\,  \pi} \prod_{(i,j) \in \pi} \Tr_\res \left[\initialresfinite  \Phi(\varphi_i)  \Phi(\varphi_j) \right]
\eeq
where a 'pairing' $\pi$ is a set of pairs $(i,j)$ with the convention $i<j$.
 \een
 By standard theory of Gaussian (or quasi-free) states, the  above properties imply  that the state $\initialresfinite$ is completely determined by a positive density function $0<\eta^{\La}(q) < \infty$ via the relations
  \beq
\Tr_\res \left[\initialresfinite  \Phi(\varphi)
\Phi(\varphi') \right] =  \langle \varphi', \eta^{\La} \varphi \rangle+  \langle \varphi, (1+  \eta^{\La} ) \varphi'\rangle  \label{eq: examples correlation functions}
\eeq
The invariance of $\initialresfinite$ under the free field dynamics is ensured by the commutation relation $[\om^\La,\eta^{\La} ]=0$.
Since the field is a finite collection of harmonic oscillators, the reference state $\initialresfinite$ is a well-defined traceclass density matrix. 
The material in this section is completely standard and we refer the reader to e.g.\ \cite{derezinski1, bratellirobinson} for details that were omitted here (note however that these texts deal with infinite volume $\La$ from the start and hence they are  necessarily more involved technically).

\subsection{Thermodynamic limit}

As long as $\La$ is finite, we cannot expect the system to have good ergodic properties and hence we will perform the thermodynamic limit as a first step. 
By  the thermodynamic limit $\La \nearrow \bbZ^d$, we mean that $L \nearrow \infty$, hence the volume $\La$  tends to $\bbZ^d$ and the set $\La^*$ tends to $\bbT^d$.
As we will see, the influence of the reservoir on the dynamics of the subsystem can be expressed entirely in terms of the correlation functions
\baq
f^{\La}_{i,j}(t)& := & \Tr_\res [\initialresfinite \Phi( \e^{\i t \om^{\La}}\phi^{\La}_i)  \Phi(\phi^{\La}_j)] \\
&=&   \langle \phi_j^{\La}, \eta^{\La} \e^{\i t \om^{\La}}  \phi_i^{\La} \rangle +  \langle \phi_i^{\La}, (1+\eta^{\La}) \e^{-\i t \om^{\La}}  \phi_j^{\La} \rangle     
\eaq
Note that $\overline{f^{\La}_{i,j}(t)}= f^{\La}_{j,i}(-t)$ by stationarity of the state $\initialresfinite$.
To discuss the thermodynamic limit of the small system behaviour, it suffices to ask that the $f^{\La}_{i,j}(t)$ converge
\beq
 f_{i,j}(t)= \lim_{\La \nearrow \bbZ^d}  f^{\La}_{i,j}(t), \qquad     \label{eq: thermo correlation}
\eeq
 uniformly in $t$ on any compact set, and that $\sup_t   \str f_{i,j}(t)  \str < \infty $.
We recall that a density matrix on $\scrH_\sys$ is a positive traceclass operator, i.e., belonging to $\scrB_1(\scrH_\sys)$, whose trace is $1$ (of course, since  $\scrH_\sys$ is finite-dimensional, any operator is traceclass). In what follows, we let $\Tr_\res$ stand for the partial trace over the field degrees of freedom, mapping density matrices on $\scrH^{\La}$ into density matrices on $\scrH_\sys$.
 \begin{lemma} \label{lem: thermodynamic limit}
 Assume that  the $f^{\La}_{i,j}(t)$ converge to bounded functions $ f_{i,j}(t)$, uniformly on compacts (i.e.\  \eqref{eq: thermo correlation}).
Then, the thermodynamic limit
\beq
 \rho_{\sys, t}  :=  \lim_{\La \nearrow \bbZ^d}   \Tr_{\res}    \left[ \e^{-\i t H^{\La}_\la} \left( \rho_{\sys, 0} \otimes \initialresfinite  \right)  \e^{\i t H^{\La}_\la}  \right]
\eeq
exists for any initial density matrix $\rho_{\sys, 0}$ on $\scrH_\sys$.
\end{lemma}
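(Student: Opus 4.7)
The plan is to Dyson-expand the unitary evolution in powers of $\la$, take the partial trace over the field, and use the Gaussian property \eqref{eq: gaussian property} to reduce every term to a finite sum of multiple time integrals involving only the scalar two-point functions $f^\La_{i,j}$. Termwise convergence as $\La \nearrow \bbZ^d$ will then follow from the hypothesis \eqref{eq: thermo correlation} and dominated convergence on the compact simplex, while the interchange of the limit with the infinite sum will be justified by a uniform-in-$\La$ bound coming from standard Gaussian/Wick combinatorics.

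Concretely, passing to the interaction picture with respect to $H_0^\La:=H_\sys+H^\La_\res$ and using the stationarity \eqref{eq: stationarity of reservoir}, the object of interest equals
\beq
\e^{-\i t H_\sys}\,\Tr_\res\!\left[U^\La(t)^*(\rho_{\sys,0}\otimes\initialresfinite)\,U^\La(t)\right]\e^{\i t H_\sys},\qquad U^\La(t):=\e^{\i t H_0^\La}\e^{-\i t H^\La_\la}.
\eeq
Dyson-expanding $U^\La(t)$ and $U^\La(t)^*$ yields a double series indexed by $(n,m)$ whose $(n,m)$-th term, after pulling the Heisenberg-evolved system operators $D_i(r):=\e^{\i r H_\sys}D_i\e^{-\i r H_\sys}$ out of the partial trace, is a time-ordered integral over $0\le t_1\le\cdots\le t_n\le t$ and $0\le s_1\le\cdots\le s_m\le t$ of a product of such $D_i$'s multiplied by an expectation in $\initialresfinite$ of a product of $n+m$ smeared field operators $\Phi(\e^{\i r_k\om^\La}\phi^\La_{i_k})$. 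By gauge invariance (assumption 2) this field expectation vanishes unless $n+m$ is even, and by \eqref{eq: gaussian property} it decomposes into a sum over the $(n+m-1)!!$ pairings, each summand being a product of two-point functions $f^\La_{i,j}$ evaluated at differences of integration times.

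For each fixed $(n,m)$ the result is a finite sum of integrals of bounded functions on a compact simplex, so \eqref{eq: thermo correlation} together with dominated convergence yields termwise convergence as $\La \nearrow \bbZ^d$. For the exchange of limit and sum, set $K:=\max_i\|D_i\|$ and $C(t):=\max_{i,j}\sup_\La\sup_{|s|\le t}|f^\La_{i,j}(s)|$; the latter is finite because uniform convergence on $[-t,t]$ and boundedness of $f_{i,j}$ control $\La$ beyond some $\La_0$, leaving only finitely many small $\La$'s. Bounding each of the $N=n+m$ system operators by $K$, each of the $N/2$ two-point factors in any pairing by $C(t)$, summing $\sum_{n+m=N}t^n t^m/(n!m!)=(2t)^N/N!$, and including the $|\caI|^N$ index choices, the order-$N$ contribution is bounded by
\beq
(2\la K|\caI|t)^N\,\frac{(N-1)!!\,C(t)^{N/2}}{N!}=\frac{(2\la K|\caI|t\sqrt{C(t)})^N}{2^{N/2}(N/2)!}
\eeq
(using $(N-1)!!/N!=1/(2^{N/2}(N/2)!)$ for even $N$), which sums over even $N$ to $\exp(2(\la K|\caI|t)^2C(t))$; this bound is $\La$-independent, so dominated convergence for series completes the argument and the limit is trace-class as a norm-convergent limit of density matrices.

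The main technical obstacle is that $H^\La_\inter$ is unbounded on the Fock space $\scrH^\La_\res$ even at finite $\La$, so the Dyson series does not converge in operator norm; this is precisely why the Gaussian reduction matters, as it produces bounded scalar integrands whose factorial $(N-1)!!$ is tamed by the $1/(n!m!)$ factors from time ordering. Rigorous justification of the Dyson expansion applied to the quasifree state can be handled by a standard cutoff argument, first approximating $\initialresfinite$ by states with finite-rank one-particle density $\eta^\La$ (so the problem becomes finite-dimensional) and then removing the cutoff by invoking the same $\La$-uniform combinatorial bound.
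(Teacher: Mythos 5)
Your proof is correct and follows essentially the same route as the paper's (Section \ref{sec: combinatorics}): Dyson expansion, Wick reduction to the two-point functions $f^{\La}_{i,j}$, termwise convergence from the uniform-on-compacts hypothesis, and a $\La$-uniform exponential bound to justify interchanging the limit with the sum. The only cosmetic difference is in how the $(N-1)!!$ pairing count is tamed: you use the explicit identity $(N-1)!!/N!=1/(2^{N/2}(N/2)!)$ together with a sup-norm bound $C(t)$ on $f^{\La}$, whereas the paper reindexes the pairing sum as an integral over ordered pairs $(u_i,v_i)$ and integrates out each $v_i$ against $\int_0^t|f^{\La}(s)|\,\d s$, arriving at the same kind of exponential bound \eqref{eq: dysonbound}.
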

The proof of Lemma \ref{lem: thermodynamic limit} is given in Section \ref{sec: combinatorics}. 
 
\subsection{Markov approximation} \label{sec: markov approximation}

The Markov approximation to the model introduced above amounts to replacing the correlation function $f_{i,j}(t)$ by  a multiple of $\delta(t)$ (no memory).  It can be justified in the weak coupling scaling limit $\la \to 0$, if one rescales time as $t \rightarrow \la^{-2}t$. We state this important result precisely in Section \ref{sec: dissipativity}.  For now, we just introduce the precise form of the Markov approximation since one of our assumptions refers to it. 
First, we introduce the \emph{left} and \emph{right multiplication operators} $M_{\links}(A), M_{\rechts}(A)$;
\beq
M_{\links}(A) S :=  A S, \qquad     M_{\rechts}(A)S :=SA^*, \qquad    A, S \in \scrB(\scrH_\sys)  \label{eq: multiplication operators}
\eeq 
Then we set 
\beq
 \tilde L := \sum_{\scriptsize{\left.\begin{array}{c} k_1,k_2 \in \{\links, \rechts \}  \\    i,j \in \caI \end{array}\right.}   }     \, \int_0^{\infty} \d t    \, \e^{\i t \adjoint(H_\sys)} \,   M_{k_2}(\i D_j)   \, \e^{-\i t \adjoint(H_\sys)} \, M_{k_1}(\i D_i)  \left\{ \begin{array}{ccc}        f_{j,i}(t)        &\textrm{if} &    k_{1}=\links \\[2mm]
 \overline{f_{j,i}(t)}   &  \textrm{if} &      k_{1}=\rechts  
 \end{array} \right. \label{def: lindblad}
\eeq
where $\adjoint(H_\sys)=[H_\sys,\cdot] $ and  the integral over $t$ is well-defined by  integrability of $f_{i,j}(\cdot)$, which will be assumed below. Finally
\beq
L :=  \mathop{\lim}\limits_{t \to \infty}\frac{1}{t} \int_{0}^t \d s \,  \e^{\i s \ad(H_\sys)}  \tilde L  \e^{-\i s \ad(H_\sys)}  \label{eq: spectral averaging}
\eeq
where  the limit $t \to \infty$ exists  since $H_\sys$ has discrete spectrum. Note also that $L$ commutes with $\adjoint(H_\sys)$ as follows from the spectral averaging in \eqref{eq: spectral averaging}.  As is discussed in many places, the Lindblad operator $L$ generates a contractive  semigroup $\e^{\frt L}, \frt \geq 0$ on $\scrB_1(\scrH_\sys)$ that is trace-preserving and positivity preserving. In other words, $\e^{\frt L}$ maps the set of density matrices on $\scrH_\sys$ into itself.  
In the above formulas, we denote time by the gothic symbol $\frt$ to emphasize that it corresponds physically to a rescaled time. Indeed, the Lindblad operator $L$ describes the dynamics on long time scales, see Section \ref{sec: dissipativity}. 
 Lindblad operators were first introduced in \cite{lindblad}, an excellent exposition on the properties of $L$ and its derivation from microscopic models can be found  in \cite{lebowitzspohn1}.

\subsection{Result}

We need an assumption on the decay of temporal correlations of  the `free reservoir correlation functions'.
\begin{assumption}[Decay of correlations] \label{ass: decay of correlations}   Recall the correlation functions $f_{i,j}$ introduced in \eqref{eq: thermo correlation}. We assume that
\beq   \label{def: function h}
   \int_0^{\infty} \d t \,    h(t)   < \infty, \qquad  \textrm{where} \quad   h(t):= \sum_{i,j \in \caI} \norm D_i \norm \norm D_j \norm  \str f_{i,j}(t)  \str
\eeq
\end{assumption}

The second assumption concerns  the Lindblad generator $L$, defined in Section \ref{sec: markov approximation}.
\begin{assumption}[Fermi Golden Rule] \label{ass: fermi golden rule} 
The operator $L$ has a simple eigenvalue at $0$. All other eigenvalues lie in the region $\{ z \in \bbC\, \big\str\,  \Re z < -\mathrm{gap}_L \} $ for some $\mathrm{gap}_L>0$. 
\end{assumption}
Obviously, Assumption \ref{ass: fermi golden rule}  and the fact that $\e^{\frt L}$ preserves density matrices, imply  that there is a unique density matrix, $\rho_\sys^{L}$,  such that $L\rho_\sys^{L} =0$  and 
\beq
 \norm \e^{ \frt L }- \str \rho_\sys^{L} \rangle \langle 1 \str \,  \norm   \leq C_L \e^{-\mathrm{gap}_L \frt},  \qquad  \textrm{for all $\frt>0$ and some }   C_L< \infty 
\eeq
where  $\norm \cdot \norm$ is the operator norm of operators acting on $\scrB_1(\scrH_\sys)$ and we use the notation $\str A \rangle \langle A' \str$ to denote the rank-$1$ operator that acts as $S \to  (\Tr[(A')^* S]) A $ with $S,A \in \scrB_1(\scrH_\sys)$ and $A' \in  \scrB(\scrH_\sys) $.
For us it is more convenient to define a characteristic time $\frt_L > 1/ \mathrm{gap}_L$ such that 
\beq
\norm \e^{\frt  L }- \str \rho_\sys^{L} \rangle \langle 1 \str \,  \norm  \leq \e^{- \frt/ \frt_L}, \qquad  \textrm{for} \, \frt> \frt_L
\label{eq: decay scale}
\eeq

Conditions that imply  Assumption \ref{ass: fermi golden rule}  have been discussed extensively, see e.g.\ \cite{frigerio,spohnapproach}.  
Here, we prefer to give a (rather generic) example where the Assumption \ref{ass: fermi golden rule}  can be checked very explicitly:  Assume that the Hamiltonian $H_\sys$ is non-degenerate, hence its spectral projections,
$P(e), e \in \sp H_\sys$, are one-dimensional. 
Then  Assumption \ref{ass: fermi golden rule}  is satisfied if and only if the continuous-time Markov process\footnote{Since $L$ commutes with $\adjoint(H_\sys)$ and preserves positive density matrices, it sends the set of density matrices diagonal in $H_\sys$-basis into itself. Since these diagonal density matrices can be identified with probability measures on $\sp H_\sys$, $\e^{\frt L}$ determines a Markov process on $\sp H_\sys$, namely the one defined by the rates \eqref{def: rates markov process}. } with (finite) state space $\sp H_\sys$ and jump rates
\beq
rate(e \rightarrow e') = \sum_{i,j}    \hat f_{i,j}(e-e')\Tr [P(e') D^*_j P(e) D_i], \qquad \textrm{where} \quad  \hat f_{i,j}(\om) = \frac{1}{2\pi} \int_{\bbR} \d t  \,  \e^{-\i  t \om } f_{i,j}(t)  \label{def: rates markov process}
\eeq
is ergodic.  This in turn can be checked by the Perron-Frobenius theorem: a sufficient condition for ergodicity is that for any two eigenvalues $e,e'$, there is a path $e_0,e_1, \ldots, e_n$ with $e_0=e,e_n=e'$ such that,  for all $i$,  $rate(e_i \rightarrow e_{i+1}) \neq 0$.
We are now ready to state our main result

\bet \label{thm: main}
Assume that Assumption \ref{ass: decay of correlations}  and Assumption \ref{ass: fermi golden rule}  hold and let  $\rho_{\sys, t}$ be defined as in Lemma \ref{lem: thermodynamic limit}. Then,
there is a $\la_0>0$ such that for all $\la$ satisfying $0 < \str \la \str < \la_0$, we have 
\beq
  \lim_{t \to \infty}  \rho_{\sys, t}   =  \rho_{\sys}^{inv}
\eeq
where  the ``invariant density matrix" $ \rho_{\sys}^{inv} = \rho_{\sys}^{inv}(\la) $ does not depend on the initial state  $\rho_{\sys,0}$.  Moreover, $\rho_{\sys}^{inv}$ is a small perturbation of $\rho_{\sys}^L$, the invariant density matrix predicted by the Markov approximation; \beq
\norm   \rho_{\sys}^{inv} -  \rho_{\sys}^L \norm  \rightarrow 0 \quad \textrm{as}  \quad \la   \to 0   \label{eq: thm statement of convergence}
\eeq
\eet
In  \eqref{eq: thm statement of convergence},  $\norm \cdot \norm$ is the operator norm of operators acting on $\scrB_1(\scrH_\sys)$ (although it does not matter since $\scrH_\sys$ is finite-dimensional).

To quantify  the speed of convergence towards the steady state  $ \rho_{\sys}^{inv} $, we need to know the decay properties of the function $h(\cdot)$ that was introduced in Assumption \ref{ass: decay of correlations}.   Let $\zeta(\cdot)$ be a nondecreasing  function on $\bbR^+$ satisfying the conditions
\beq
1 \leq \zeta(t+t')  \leq      \zeta(t) \zeta(t'), \qquad \textrm{for any} \,  t,t' \in \bbR^+    \label{eq: subadditivity}
\eeq
We assume that this function governs the decay of the bath  correlation function $h$, in the sense that
\beq
\int_0^{\infty} \d t h(t)  \zeta(t) < \infty   \label{eq: decay of h in zeta}
\eeq
The case where $ \zeta(t)$ can be chosen to be exponentially increasing, is particularly simple but introduces a complication to the statement of the following result. Therefore we exclude this case explicitly by demanding 
\beq 
\int_0^{\infty} \d t \,  \e^{-\ka t}\zeta(t) <\infty, \qquad  \textrm{for any} \,\, \ka>0   \label{eq: zeta subexponential}
\eeq

\begin{proposition} \label{prop: speed of convergence} Assume the conditions of Theorem \ref{thm: main}.
Let $\zeta$ be a non-decreasing function as above, satisfying (\ref{eq: subadditivity}- \ref{eq: decay of h in zeta}-\ref{eq: zeta subexponential}), and let $\frt_L$ be chosen such that \eqref{eq: decay scale} holds.
Then,  for $\str \la \str$ small enough, 
\beq
\norm \rho_{\sys, t}   - \rho_{\sys}^{inv} \norm  \leq    \exp{\left(- \frac{ \la^{2}t}{ {\frt_L} + o(\str \la \str^0) }  \right) } + o(\str\la\str^0) \left(\zeta \left(\frac{\la^{2}t }{ 2{\frt_L}}\right)\right)^{-1}, \qquad \textrm{for any} \, \,  t >  {\frt_L} \la^{-2}   \label{eq: prop speed}
\eeq
\end{proposition}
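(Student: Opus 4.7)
The plan is to extract the bound from a polymer expansion of $\rho_{\sys,t}$ by separating a dominant ``Markovian'' piece, which inherits the spectral gap of $L$, from a non-Markovian remainder controlled by $\zeta$. By Lemma \ref{lem: thermodynamic limit}, $\rho_{\sys,t}$ admits a Dyson expansion whose Gaussian contraction over the reservoir produces a sum over pairings of interaction times; the polymer construction of Section \ref{sec: combinatorics} organizes these pairings into connected clusters whose weights are controlled by products of values of $h$ (Assumption \ref{ass: decay of correlations}). For $\str\la\str$ small enough the expansion converges absolutely, and resumming the ``short'' clusters (those of temporal extent below a cutoff $\tau_0\ll\la^{-2}$) yields an effective generator $\caL_\la$ on $\scrB_1(\scrH_\sys)$ with $\caL_\la=\la^2 L+o(\la^2)$ and a representation
\[
\rho_{\sys,t}=\e^{t\caL_\la}\rho_{\sys,0}+\caR_\la(t),
\]
where $\caR_\la(t)$ collects the remaining ``long'' clusters.

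From this decomposition the first term of \eqref{eq: prop speed} is extracted as follows. Analytic perturbation of the isolated simple eigenvalue $0$ of $L$ (Assumption \ref{ass: fermi golden rule}) gives $\caL_\la$ a unique kernel element $\rho_\sys^{inv}$ with spectral projector $\str\rho_\sys^{inv}\rangle\langle 1\str$, and moves the rest of the spectrum of $\caL_\la$ into $\Re z<-\la^2/(\frt_L+o(\str\la\str^0))$. Hence for $t>\frt_L\la^{-2}$,
\[
\norm\e^{t\caL_\la}-\str\rho_\sys^{inv}\rangle\langle 1\str\norm\leq\exp\Bigl(-\frac{\la^2 t}{\frt_L+o(\str\la\str^0)}\Bigr).
\]
The $\zeta^{-1}$ term I would treat by splitting time at the midpoint: using the semigroup identity
\[
\rho_{\sys,t}-\rho_\sys^{inv}=\bigl(\e^{(t/2)\caL_\la}-\str\rho_\sys^{inv}\rangle\langle 1\str\bigr)\rho_{\sys,t/2}+\widetilde{\caR}_\la(t/2,t/2),
\]
where $\widetilde{\caR}_\la(t/2,t/2)$ gathers polymer weights of clusters bridging the midpoint $t/2$. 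Every such bridging cluster carries at least one $\la^2$-vertex and has temporal extent $\geq t/2$; pulling $\zeta(t/2)^{-1}$ out of its $h$-weight by subadditivity \eqref{eq: subadditivity} and summing the remaining weights by \eqref{eq: decay of h in zeta} yields a contribution of size $o(\str\la\str^0)\zeta(t/2)^{-1}$. Rescaling $t/2$ to the natural Markov time $\la^2 t/(2\frt_L)$ and invoking \eqref{eq: subadditivity} once more produces the second term of \eqref{eq: prop speed}.

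The principal obstacle is the simultaneous choice of the cutoff $\tau_0$: it must be much longer than the correlation scale of $h$ (so that $\caL_\la$ really equals $\la^2 L$ up to $o(\la^2)$, placing the perturbed gap at $\la^2/(\frt_L+o(\str\la\str^0))$) and much shorter than $\la^{-2}$ (so that long clusters in $\caR_\la$ carry an extra $\la^2$-factor beyond what is absorbed in $\caL_\la$, making the prefactor of $\zeta^{-1}$ genuinely $o(\str\la\str^0)$). Condition \eqref{eq: zeta subexponential} is essential here: if $\zeta$ were permitted to grow exponentially, its decay and the Markov decay could not be told apart, and the additive split of \eqref{eq: prop speed} into exponential and $\zeta^{-1}$ terms would collapse into a single (worse) constant in the exponent.
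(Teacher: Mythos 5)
Your overall strategy---separating a Markovian piece that carries the spectral gap from a non-Markovian remainder that carries the $\zeta$-decay---matches the spirit of the paper, but the step that is supposed to produce the $\zeta^{-1}$ term rests on a false claim. A cluster that ``bridges the midpoint $t/2$'' need not have temporal extent $\geq t/2$: a single reservoir pairing with times $t/2-\delta$ and $t/2+\delta$ bridges the midpoint yet has extent $2\delta$, and its weight carries no factor $\zeta(t/2)^{-1}$ at all. So summing bridging clusters does not give $o(\str\la\str^0)\,\zeta(t/2)^{-1}$. The actual mechanism is different: in the difference $Z_{N'}-Z_N$ the full length $N$ must be covered \emph{jointly} by the temporal extents of the clusters (which pay $\dist_\zeta(A)^{-1}$ by Assumption \ref{ass: integrability}) and by the uninterrupted strings of $T(1-R)$ following each cluster (the sets $\hook(A)$), which pay $(1-g)^{\str \hook(A)\str}$; the factor $\zeta(N)$ is then distributed over both kinds of stretches via \eqref{eq: subadditivity}, and the hook stretches absorb their share of $\zeta$ only because of the subexponentiality condition \eqref{eq: zeta subexponential} (the constant $c(\zeta,g)$ of \eqref{def: c zeta}). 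The right to insert $(1-R)$ after each cluster comes from the unitarity identity \eqref{eq: conservation of prob}, which plays no role in your argument; without it, a short bridging cluster followed by a long quiet stretch contributes $O(\ep)$ with no decay in $t$ whatsoever.

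Two further points. First, the representation $\rho_{\sys,t}=\e^{t\caL_\la}\rho_{\sys,0}+\caR_\la(t)$ with an effective generator obtained by ``resumming short clusters'' is asserted, not constructed; controlling such a remainder is essentially the content of the theorem, and the paper deliberately avoids it by working with the discrete one-step map $T_{\la,\ell}$ over macroscopic intervals of length $\la^{-2}\ell$, for which Theorem \ref{thm: weak coupling} plus ordinary spectral perturbation of an isolated eigenvalue suffice. Second, you identify $\rho_{\sys}^{inv}$ with the kernel of $\caL_\la$; but the true invariant state (the limit of $\rho_{\sys,t}$) differs from the fixed point of the Markovian part by $O(\ep)$ (this is exactly the estimate $\norm \rho_\sys^{inv}-\rho_\sys^{T}\norm = O(\ep)$ in Theorem \ref{thm: main discrete}), so your midpoint identity cannot be exact as written, and substituting the wrong fixed point would leave a non-decaying additive $O(\ep)$ error in \eqref{eq: prop speed}. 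Once Theorem \ref{thm: main discrete} is available, the paper's proof of the Proposition is a two-line reduction: write $t=\la^{-2}N\ell$ with $\ell\in[\frt_L,2\frt_L]$, apply \eqref{eq: bound on decay to inv} with $C_g=1$, and use monotonicity of $\zeta$ to replace $\zeta(\la^2 t/\ell)$ by $\zeta(\la^2 t/(2\frt_L))$.
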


Note that  Proposition \ref{prop: speed of convergence} makes no claim about the reduced dynamics $\rho_{\sys,t}$ for short times $ t <  \la^{-2}  {\frt_L}$. The restriction to long times is natural since, for times shorter than $ \la^{-2}  {\frt_L}$,  the exponential decay of the semigroup is not yet visible. For those times,  $\rho_{\sys, t} $ is however well-described by the Markov approximation, see Theorem \ref{thm: weak coupling}.  
On the RHS of \eqref{eq: prop speed}, the time $t$ appears essentially in the combination $ \la^2 t/\frt_L$.  As far as the first term is concerned, this is natural since that term originates from the Markov approximation, i.e.\ the temporal decay embodied in that term takes place on the macroscopic time scale $\sim \la^{-2}\frt_L$.  The second term, however, comes from the slow decay of the reservoir correlation function $h(t)$ on the microscopic time scale, and as such it is not clear why that decay gets prolonged to the macroscopic scale in \eqref{eq: prop speed}. The estimate in that second term is indeed far from optimal (note also the weird factor $'2'$ multiplying $\frt_L$) and this is due to the generality of our result. If, for example, one assumes that $\zeta(t) \sim \str t \str^{\al}, \al>0 $, then one can  state a sharper and more explicit bound.

\subsection{ Discussion and comparison with earlier results}\label{sec: discussion}

\subsubsection{Restriction to confined systems} \label{eq: restriction to confined}
Our result is suited for confined small systems. We explain this in more detail and we distinguish essential assumptions from those made for convenience. 
\begin{list}{ \Alph{qcounter}:~}{\usecounter{qcounter}}
\item The assumption that the 'atom' Hilbert space $\scrH_\sys$ is finite-dimensional, seems not crucial to us. Atoms with an infinite number of energy levels (like the harmonic oscillator) should be treatable with the same technique. A complication that does arise in such infinite-dimensional atoms is that the relaxation of the Markov semigroup is in general not exponential since, in the absence of very energetic field quanta, the atom needs a large time to cascade from a very energetic level to the low-lying levels. We believe however  that this can be remedied by a change of norm on (a subspace of) $\scrH_\sys$ that renders the relaxation exponential, at least for a certain class of interaction Hamiltonians.

\item  The restriction to atom-bath couplings that are linear in the field operators is for notational simplicity only. One can study quadratic coupling in the same way.  Coupling terms of higher order do not yield a well-defined Hamiltonian for bosonic baths, although they are well-defined for fermionic systems. In that case (fermionic baths with coupling of order at least $3$) one has to use sign cancellations to control the Dyson expansion (this is done e.g.\ in \cite{jaksicpautratpillet}) and in such cases an operator-theoretic treatment might be favorable.  
\item  The real assumption that excludes application of our result to extended systems is Assumption \ref{ass: decay of correlations} and more concretely, the sum over $i,j \in \caI$.  For an extended system, the simplest translation invariant coupling would be of the form
\baq
H_{\inter}  &= &    \int_{\tor} \d q \varphi(q)  \e^{\i q X} a_q + \overline\varphi(q)  \e^{-\i q X} a^*_q    \label{eq: trans invariant ham}  \\
&=&   \sum_{x  \in \lat}      \str x \rangle \langle x \str \otimes \Phi (\varphi_x), \qquad \textrm{with}\,  \,   \varphi_x(q) = \e^{\i q x} \varphi(q)
\eaq
where we have taken $\La=\bbZ^d$. 
 The expression on the second line is of the form \eqref{def: interaction ham} with the index set $\caI =\lat$. 
 Even though one could demand that the correlation functions are integrable in time in the sense that 
  \beq
\sup_{x,x'} \int_0^{+\infty} \d t \,  h_{x,x'}(t)   <\infty,
 \eeq
 with
 \beq
 h_{x,x'}(t)  :=  \lim_{\La \nearrow \bbZ^d} \left\str \Tr_\res\left[ \initialresfinite   \Phi ( \e^{\i t \om}\varphi_x)   \Phi (\varphi_{x'})   \right]   \right\str,  
 \eeq
 then still Assumption \ref{ass: decay of correlations} cannot hold because of the sum over $x,x' \in \lat$. In fact, the appearance of the double sum is artificial and one can arrange to have a single sum, and moreover, $h$ depends on the difference $x'-x$ only. Hence,  Assumption \ref{ass: decay of correlations}  would boil down to 
  \beq
   \int_0^{\infty} \d t  \sum_{x} h_{0,x}(t)     < \infty   \label{eq: condition for extended}
 \eeq
 and this assumption cannot be satisfied for any interaction Hamiltonian of the type \eqref{eq: trans invariant ham}. 

\end{list}

\subsubsection{Interacting reservoirs} \label{sec: interacting reservoirs}

Models where the heat bath is not free, i.e.\ it is made up of a genuinely interacting system, are a far dream at this moment. However, we would like to draw attention to the fact that, in contrast to earlier results, our method does not exclude such reservoirs \emph{per se}.  Indeed, the important ingredient of our analysis is a temporal decay condition on the reservoir correlation function. This condition is stated in Assumption \eqref{ass: integrability}, and, for free reservoirs, it is satisfied provided that Assumption \ref{ass: decay of correlations} holds.   The huge challenge is of course to prove such a condition for an interacting system. First steps in this direction have recently been taken in \cite{lukkarinenspohnquantumfluids}.

\subsubsection{Algebraic quantum dynamical systems}
In the literature on the subject, mixing properties are mostly investigated in a more general framework, allowing for initial states that are not factorized (but still local perturbations of $\initialres$) and treating observables that depend on the field as well (since Theorem \ref{thm: main} deals with the reduced dynamics, we get information on observables of the small system only).  In particular, one usually studies the system in the framework of $C^*$ or $W^*$-algebras, in which the concepts  ``ergodicity" and ``mixing" have a natural meaning, inherited from the theory of dynamical systems.  For an introduction to these matters, we refer to \cite{derezinski1,bratellirobinson}. 
It is  straightforward to extend our approach such as to  prove mixing in the above sense,  but since this asks for more notation in Section \ref{sec: discretization}, we have opted not to do so.  The same remark applies to the study of multitime-correlation functions of   small system observables. Our technique shows that these correlation functions are perturbatively close to correlation functions calculated within the Markovian approximation\footnote{Yet, they are qualitatively different, since in the Markovian model, correlations decay exponentially, whereas at finite $\la$, the speed of decay is in general not faster than the decay of the correlation functions $f_{i,j}(t)$.}, see also \cite{duemcke}.  
A drawback of our technique with respect to the algebraic approach is that, in the case where $\initialres $ is a Gibbs state, it is not immediately clear that the invariant state $\rho_{\sys}^{inv}$ is the restriction of the coupled Gibbs state to the small system. 
However, if one extends the class of initial states as suggested above, this does immediately follow. 

\subsubsection{Comparison with earlier results}

One should distinguish between the case where the Gaussian reference state of the field has a non-zero density (temperature) in the thermodynamic limit, i.e.\ $\lim_{\La \nearrow \bbZ^d}   \str\La \str^{-1} \sum_{q \in \La^*}\eta^{\La}(q) >0$, or not. In the latter case, the field is essentially in the vacuum state and the approach to a steady state is related to the question whether the ground state of the coupled system (assuming that  it exists) is the only bound state and whether the rest of the spectrum is absolutely continuous. 
These questions have been extensively studied in  \cite{hubnerspohnspectral, bachfrohlichsigalqed, bachfrohlichsigalsoffer, goleniapaulifierz}.  In one sense, our results are sharper than those quoted: they cover cases where the coupled system has no ground state, yet there is approach to a steady state for the small system.  We do not explain nor develop these issues further here, but rather postpone them to a subsequent paper.  However, the quoted results  are  stronger in the sense that they allow for the confined system to have continuous spectrum above a ionization threshold.

If the field has a positive density, the prime example is of course the case where the field is in a thermal state at non-zero temperature, then the only results that we are aware of, rely on complex deformations of Liouvillians. One either uses complex translations or dilations. 
To streamline the discussion, we note that one can  rewrite the correlation functions $f_{i,j}(t)$ as 
\beq
f_{i,j}(t) = \int_{\bbR} \d \om  \,  \e^{\i t \om} \,   \hat f_{i,j}(\om), \qquad      \hat f_{i,j}(\om):=  \langle \overline{\varphi_{i, eff}(\om)},   \varphi_{j,eff}(\om)  \rangle_{\scrS}  \label{eq: correlation as fourier}
\eeq
such that $\varphi_{i,eff}$, the effective form factors (in the thermodynamic limit), are functions from $\bbR$ to some Hilbert space $\scrS $ that emerge naturally if one follows the operator-theoretic approach to the problem. They are often called "effective form  factors"(effective because they incorporate the density function $\zeta$ of the reservoir).  In the physical literature on the subject, the function  $\hat f_{i,j}(\om)$ is often called the 'spectral function'.

The first result on RTE, due to \cite{jaksicpillet2,jaksicpillet4}, proceeds by assuming that 
\begin{itemize}
\item The function $\om \mapsto \hat f_{i,j}(\om)$ is analytic in a strip of width $\ga_0$ such that $\om \mapsto \hat f_{i,j}(\om+\i\ga)$ is in $L^1(\bbR, \d\om)$  for $0 <\ga<\ga_0 $. 
\end{itemize}

 This of course corresponds to exponential decay of $f_{i,j}(t)$.  This result has been improved in \cite{derezinskijaksicreturn, derezinskijaksicspectral} where analyticity is replaced by demanding that $ \hat f_{i,j}(\cdot)$ is in $C^2$, implying $f_{i,j}(t) \sim \str t\str^{-2}$.  A related approach is found in \cite{froehlichmerkli}.

The approach via dilation analyticity has been pioneered by \cite{bachfrohlichreturn}. There one assumes that 
\begin{itemize}
\item the function $\om \mapsto \hat f_{i,j}( \e^{\i (\mathrm{sign} \om) \ga }\om)$ is in $L^1(\bbR, \d\om)$  for $0 <\ga<\ga_0 $  (this is dilation analyticity)
\item $\hat f_{i,j}(\e^{\i (\mathrm{sign} \om) \ga}\om)  \leq  \str\om\str^{1+\al}$ for some $\al>0$.  
\end{itemize}
By deforming the integration contour $ \bbR$  in \eqref{eq: correlation as fourier} into $\e^{-\i \ga}\bbR_- \cup \e^{\i \ga}\bbR_+ $ , one realizes that this implies that 
\beq
\str f_{i,j}(t) \str \leq   const \, t^{-(2+\al)} (\log t)^{const'}   
\eeq
and hence this case is covered by our result.

\subsection{Strategy of the proof} \label{sec: strategy of proof}

Our proof is based on a polymer expansion in real time. In the context of classical stochastic dynamics, such expansions were successfully applied in e.g.\ \cite{bricmontkupiainenexponentialdecay, maesnetocnyspacetime}, and in the case of classical deterministic dynamics in \cite{bricmontkupiainencoupledmaps}. For the case at hand, a similar strategy was pursued in \cite{deroeck}. In the following Sections \ref{sec: markov and leading},  \ref{sec: discussion polymer} and \ref{sec: dyson discussion}, we introduce the rough ideas. 
 
\subsubsection{Markovian approximation and leading dynamics} \label{sec: markov and leading} 

We discretize time $t=N\nu$ where $\nu$ is a macroscopic time unit $\nu = \la^{-2}\ell$, with $\ell$ a $\la$-independent number that could actually be chosen $\ell=1$.  Then,  we 
 write $\rho_{\sys,t}=Z_N \rho_{\sys,0}$ where $\rho_{\sys,t}$ is the reduced time-evolved density matrix and $Z_N$ could be called the 'reduced evolution operator'. The idea is that $T \equiv Z_{N=1}$ can be analyzed quite well, at least for sufficiently small coupling $\la$, because in that regime the Markovian approximation (Section \ref{sec: markov approximation}) can be justified.   Indeed, we will state in Section \ref{sec: dissipativity} that $T$ is well-approximated by $\e^{\la^2\nu L}$, with $L$ the Lindblad generator (also mentioned in Section \ref{sec: markov approximation}). This is not proven in the present paper since the proof is well-known in the literature.  For now, we view $T$ as the leading dynamics.  
 An important consequence of the fact that $T$ is close to $\e^{\la^2\nu L}$ and of Assumption \ref{ass: fermi golden rule}, is that we can establish that the operator $T$ has a  simple eigenvalue $1$ (this eigenvector is the 'steady state' $\rho_\sys^{T}$) and the rest of the spectrum lies in a circle  with radius $1-g <1$. Since $T$ is trace conserving, $\Tr T \rho_{\sys,0}= \Tr \rho_{\sys,0}$, the  'right' eigenvector corresponding to the eigenvector $1$ is the identity $1 \in \scrB(\scrH_\sys)$, hence we have the spectral decomposition
 \beq \label{eq: decomposition of T}
 T = R +(1-R) T, \qquad   R =  \str\rho_\sys^{T} \rangle \langle 1 \str, \qquad   \norm  (1-R) T^n \norm \leq  C (1-g)^n
 \eeq    
 This property of $T$ is proven in Section \ref{sec: dissipativity} by simple perturbation theory (with $\e^{\la^2\nu L}$ being the 'unperturbed object') , but it is introduced already in Section \ref{sec: general} as an assumption. 
 
 \subsubsection{Polymer representation} \label{sec: discussion polymer}
 
 If the reduced dynamics $Z_N$ were exactly Markovian, we would have $Z_N = T^N$, i.e.\ $Z_N$ could be called a 'quantum Markov chain'. 
 However, this is of course not the case and as $N$ grows the difference between $Z_N$ and $T^N$ becomes important.  We represent the corrections to $T^N$ by 'nonmarkovian excitations' that are localized in time.  For example, 
 \beq
 Z_2 = T^2 +  \caT[\bbE^c(B({1,2}))]
 \eeq
 where $ \bbE^c(B({1,2}))$ is an operator on $\scrB_1(\scrH_\sys) \otimes \scrB_1(\scrH_\sys)$ that should be thought of as localized in the macroscopic times $1$ and $2$ (actually, in macroscopic time intervals $[0, \nu]$ and $[\nu, 2\nu]$). The operation $\caT[\cdot]$ is a  time-ordering; it converts  $ \bbE^c(B({1,2}))$ to an operator on $\scrB_1(\scrH_\sys)$, such that it is on the same footing as $T$ (see the full definition in Section \ref{sec: correlation functions}. We are actually abusing the correct definition slightly in the present section). For $Z_3$, we get
 \beq
 Z_3 = T^3 + T \caT[\bbE^c(B({1,2}))] +   \caT[\bbE^c(B({2,3}))] T + \caT[\bbE^c(B(\{1,2,3\})]  +  \caT \left[ \bbE^c(B({1,3})) T(2) \right]
 \eeq
 where,  for a general set of macroscopic times $A$,  $\bbE^c(B(A))$  denotes the (irreducible) excitation that is localized in the elements of $A$ (it acts on the $\str A \str$-fold tensor power of $\scrB_1(\scrH_\sys)$). 
Since, in the rightmost term,  the excitation $\bbE^c(B({1,3})) $ is localized in times $1$ and $3$, and the $T$-operator represents the leading dynamics in the second time interval (hence the `$2$' in $T(2)$), we need to squeeze $T(2)$ in between the excitations at times $1$ and $3$. 

 For general $N$, the resulting expression for $Z_N$ is 
 \beq
Z_N   =   T^N+  \sum_{\caA  \in \textrm{Pol}(N)}    \caT   \left[ \left( \mathop{\bigotimes}\limits_{\tau \in I_N  \setminus \supp \caA} T(\dt) \right)   \bigotimes  \left(\mathop{\bigotimes}\limits_{A \in \caA} \bbE^c(B(A) ) \right)    \right]   \label{eq: Z from connected correlation functions preview}  
\eeq
where the \emph{polymer set} $\textrm{Pol}(N) $ is the set of nonempty collections $\caA$ of disjoint subsets  $A$ of $I_N= \{1,2, \ldots, N\}$.
To analyze this polymer expression, we use two tools: bounds on the excitation operators $\bbE^c( B(A) ) $ and a Feynman rule. \\
\textbf{Bounds} \, We will bound each term in the sum  \eqref{eq: Z from connected correlation functions preview}  in operator norm by  $ \prod_{A \in \caA} \norm \bbE^c( B(A) ) \norm_{\weird}$ where the norm $\norm \cdot \norm_{\weird}$ is defined in Section \ref{sec: norms}. The $T(\tau)$-operators do not show up in these bounds since they have norm $1$. 
 We will require that  $\norm \bbE^c( B(A) ) \norm_{\weird}  \sim \ep^{\str A \str}$ for some small parameter $\ep$ and, moreover, that $\norm \bbE^c( B(A) ) \norm_{\weird}$ decreases as the macroscopic times, i.e.\ the element of $A$; are further apart. This decrease as a function of temporal distance is a consequence of 
Assumption \ref{ass: decay of correlations}, but in Section \ref{sec: general} it is introduced as Assumption \ref{ass: integrability}. In Section \ref{sec: dyson}, we prove how Assumption \ref{ass: decay of correlations} implies Assumption \ref{ass: integrability}. 
 \\
\textbf{Feynman rule} \,   It is not hard to see that the bounds given above, when summed over the different terms in \eqref{eq: Z from connected correlation functions preview} lead to a too pessimistic bound on $Z_N$.  Even if we restrict to sets $A$ whose elements are consecutive integers (which is essentially justified because of the temporal decay), then we still get an exponentially diverging bound, of order $\e^{C \ep N}$, for some constant $C>0$. 
To improve our bounds, we use a Feynman rule (one could also call it a Ward identity) that is a consequence of conservation of probability of the dynamics $Z_N$, to be explained in Section \ref{sec: conservation of probability}.  In our general polymer expansion, this Feynman rule  implies that, for  every uninterrupted string of $T(\cdot)$ operators that follows a set $A$,  we can insert the spectral projection $(1-R)$ in front of the string of $T$'s.  By \eqref{eq: decomposition of T}, this
yields exponential decay in the length of the string. This is illustrated in Figure \ref{fig: connected1} (the sets $\hook(A)$ will be defined later).
\begin{figure}[h!]  
\psfrag{one}{ $1$}
\psfrag{last}{ $N$}
\psfrag{hookset1}{ $\hook(A_1)$}
\psfrag{hookset2}{ $\hook(A_2)$}
\psfrag{hookset3}{ $\hook(A_3)$}
\includegraphics[width = 16cm, height=3cm]{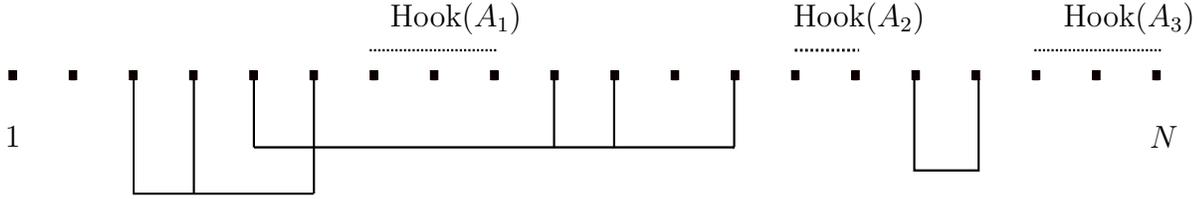}   
\caption{\footnotesize{An example of a $\caA$ with $\caA=\{ A_1,A_2, A_3\}$. In the picture, $N= 20$, and $A_1=\{ 3,4,6\}, A_2=\{ 5,10,11,13\}, A_3=\{16,17 \}$.    The exponential decay is on the string of times that are covered by the dotted lines. These strings are 
$  \hook(A_1)=\{ 7,8,9 \}, \hook(A_2)=\{ 14,15\}, \hook(A_3)=\{18,19,20 \}$.    These are exactly the times between $\max A$ for some $A$ and the next-in-time element of some other set $A'$.  }}  \label{fig: connected1}
\end{figure}

Armed with the Feynman rule and the bounds on $\bbE^c$, we can now perform the sum over all terms on the RHS of \eqref{eq: Z from connected correlation functions preview}, resulting in 
\beq
\norm Z_N - T^N \norm = O(\ep)  \quad   \Rightarrow  \quad  \norm Z_N -R \norm = C(1-g)^N + O(\ep)
\eeq 
By very similar arguments, one can see that $Z_N$, for large $N$ is $\ep$-close to a one-dimensional projector.  Indeed, by the exponential decay following any excitation, all contributions to $Z_N - T^N$ tend to be localized at times close to the final time $N$. This means that they start with a  string of $T$'s of length $O(N)$; such a string is equal to  $R$, up to an error of order  $(1-g)^{O(N)}$. Hence, up to a vanishing error,  all contributions get multiplied by $R$ on the right, and consequently they are of the form $ \str S \rangle \langle 1 \str$ for some  $S \in \scrB(\scrH_\sys)$. This means  that also the limit $\lim_{N \nearrow \infty}Z_N$ is of this form.  By conservation of trace and positivity it then follows that 
\beq
\lim_{N \nearrow \infty}Z_N =  \str \rho_\sys^{inv} \rangle \langle 1 \str, \qquad  \textrm{for some density matrix} \, \rho_\sys^{inv}:  \quad  \norm \rho_\sys^{inv}-\rho_\sys^{T} \norm = O(\ep), \qquad  \ep \to 0
\eeq
These arguments are presented in Section \ref{sec: proof of main theorem}. 

\subsubsection{Dyson expansion}  \label{sec: dyson discussion}
The Dyson expansion is introduced to prove the bounds on $\norm \bbE^c( B(A) ) \norm_{\weird}$ discussed above. This is done in Section \ref{sec: estimates}.   It is also the standard tool to prove the weak coupling limit, Theorem \ref{thm: weak coupling}.

As we will do later on in the proof, we assume  for simplicity that the set $\caI$ has just one element, such that we can drop the index $i \in \caI$ and simply write $H^{\La}_{\inter} = D \otimes \Phi(\varphi^{\La})$. For any operator $O $ on $\scrH^{\La}$,  let $O(t)= \e^{\i t (H_\sys + H^{\La}_\res)}  O  \e^{-\i t (H_\sys + H^{\La}_\res)}  $, and we abbreviate $ \Phi(t) \equiv (\Phi(\phi^{\La}))(t) $, then we can write the  Duhamel expansion  (the convergence of the series is easily established) 
\baq
\e^{\i t \adjoint(H_\sys)} \rho_{\sys, t} & :=&  \lim_{\La \nearrow \bbZ^d}  \e^{\i t H_\sys}  \Tr_{\res}    \left[ \e^{-\i t H^{\La}_\la} \left( \rho_{\sys, 0} \otimes \initialresfinite  \right)  \e^{\i t H^{\La}_\la}  \right]   \e^{-\i t H_\sys}     \label{eq: derivation of dyson}  \\
  &=&    \lim_{\La \nearrow \bbZ^d}  \sum_{n_{\links}, n_{\rechts} \in \bbN }  (-\i \la)^{n_\links}  (\i \la)^{n_\rechts}   \mathop{\int}\limits_{s_1 < \ldots < s_{n_\links}} \d s_1\ldots \d s_{n_\links}  
 \mathop{\int}\limits_{s'_1 < \ldots < s'_{n_\rechts}} \d s'_1\ldots \d s'_{n_\rechts}     \nonumber 
   \\[1mm]
  &&  \qquad   \Tr_{\res}    \left[ H^{\La}_{\inter}(s_{n_\links}) \ldots      H^{\La}_{\inter}(s_2)H^{\La}_{\inter}(s_1)   \left( \rho_{\sys, 0} \otimes \initialresfinite  \right)   H^{\La}_{\inter}(s'_{1})      H^{\La}_{\inter}(s'_2)\ldots  H^{\La}_{\inter}(s'_{n_\rechts})   \right]  
  \nonumber
   \\ [3mm]
&=&  \sum_{n_{\links}, n_{\rechts} \in \bbN }   (-\i \la)^{n_\links}  (\i \la)^{n_\rechts}        \lim_{\La \nearrow \bbZ^d}    \Tr_{\res}  \left[   \Phi(s_{n_\links}) \ldots \Phi(s_2) 
    \Phi(s_1) \initialresfinite  \Phi(s'_1)  \Phi(s'_2) \ldots   \Phi(s'_{n_\rechts})    \right]    \nonumber   \\[1mm]
&&\,
\mathop{\int}\limits_{0 <s_1 < \ldots < s_{n_\links}<t} \d  \underline{s}
  \mathop{\int}\limits_{0<s'_1 < \ldots < s'_{n_\rechts}<t}   \d  \underline{s}'   \, \,  
    D(s_{n_\links}) \ldots D(s_2)  D(s_1)  \rho_{\sys, 0} D(s'_1)  D(s'_2) \ldots   D(s'_{n_\rechts})    \nonumber 
\eaq

Note that the operators on the last line  act trivially on $\scrH^{\La}_\res$, and hence they are independent of the volume $\La$. 
The expression on the one-but-last line is a $n_\links+n_\rechts$-point correlation function corresponding to the Gaussian (quasi-free) state $\initialresfinite$.   Hence, by the Wick theorem, we can expand this correlation function into sums of pairings of the $n_\links+n_\rechts$ of products of two-point correlation functions $f^{\La}(\cdot)$, as in formula \eqref{eq: gaussian property}.  Each term in the sum is determined by a pairing of the $n_\links+n_\rechts$-times, it is called a "diagram". 

Starting from the Dyson expansion, we first identify which terms in that expansion make up the "excitation operators" $\bbE^c(B(A))$.  This is particularly intuitive. For example, the operator $\bbE^c(B(\{ \tau, \tau' \}))$ is built by all terms (diagrams) in the Dyson expansion whose times $\underline{s}, \underline{s}' $ fall into the domain 
$\nu[\tau-1, \tau] \cup \nu [\tau'-1, \tau']$ and such that at least one pair in the pairing connects the two intervals, i.e.\ it has one of its time-coordinates in each interval.  
For a general set of times $A$, the operator $\bbE^c(B(A))$ is made up by diagrams such that the set $A$ is connected by the pairs in that diagram 
 Two examples will be  given in Figure \ref{fig: diagrammicro}.

\section{Polymer model}  \label{sec: general}
In this section, we start from a discrete-time dynamical system and we derive the approach to a steady state, given some assumptions that will be justified in Section \ref{sec: discretization}. Apart from the first paragraphs, the discussion in this Section is independent of the setup given in Section \ref{sec: introduction}.  In particular, it could be applied without any change to other models, hinted at in points A and B of Section \ref{eq: restriction to confined}.

\subsection{Reduced dynamics and excitations} \label{sec: reduced dynamics}
 We define the propagator $U(\dt)$, implementing the dynamics between \emph{macroscopic} times $\tau-1$ and $\tau$, with $\dt \in \bbN$, and acting on joint density matrices $\rho_{\sys\res}$;
\beq
U(\dt)    :=   \e^{\i \nu \dt  \ad(H^{\La}_\res)}   \e^{-\i \nu  \ad(H_\lambda^{\La})}     \e^{-\i \nu (\dt-1)   \ad(H^{\La}_\res)} 
\eeq
where we have chosen the macroscopic times to be related to the microscopic times by  a  scaling factor $\nu$ that will be fixed in Section \ref{sec: discretization}, depending on details of the model.  The total dynamics (in the interaction picture) up to microscopic time $N$ is then given by 
\beq
U(N) \ldots    U(2)U(1) =   \e^{\i \nu N \ad(H^{\La}_\res)}   \e^{-\i \nu N  \ad(H^{\La}_\la)}  
\eeq
Furthermore, we define a projection operator $P$;
\beq
P \rho_{\sys\res} := \left(\Tr_{\res}   \rho_{\sys\res}\right)  \otimes \initialresfinite
\eeq
A distinguished role is played by the reduced dynamics $T(\dt)$, acting on density matrices $\rho_\sys$ on $\scrH_\sys$. It is defined by 
\beq
(T(\dt) \rho_\sys ) \otimes \initialresfinite  =   P U(\dt) (\rho_\sys \otimes \initialresfinite  )
\eeq
  Note that $T(\dt)$ is independent of $\dt$ by the stationarity property \eqref{eq: stationarity of reservoir}, but we still write the dependence on $\dt$ for bookkeeping reasons that will become clear in Section \ref{sec: correlation functions}.
It is convenient to abuse the notation and let $T(\dt)$ stand for $ T(\dt) \otimes 1$ as well, such that it acts on joint density matrices. (This abuse of notation appears only in the present section.) 
Note that, as operators on joint density matrices, 
\beq
T(\tau) (1-P) =  (1-P) T(\tau)  \neq 0
\eeq
Define the 'excitation operator' 
\beq
B(\dt) :=   U(\dt) - T(\dt) 
\eeq
and note that
\beq
P B(\dt) P =0
\eeq
Ultimately, we are interested in the reduced dynamics:
\baq
Z_N   &:=&  P  U(N) \ldots     U (2) U (1)  P =  P  \e^{-\i \nu N \ad(H^{\La}_\la) }  P \label{def: Z}
\eaq
We will next insert the decomposition $U(\dt)=T(\dt)+B(\dt)$.  We note that the definitions of $U(\dt), B(\dt)$ are given only  in  finite volume ($\str \La \str < \infty$). However, in Section \ref{sec: finite volume limits}, we will obtain some concepts (including $Z_N$) that do admit a thermodynamical limit.

\subsection{Correlation functions of excitations} \label{sec: correlation functions}

For notational purposes, it makes sense to define the following: Let $V$ be an operator in
$ \otimes_{\dt \in I_N}\scrR_{\dt}$  
where $I_N=\{ 1, \ldots,N \}$ and each $\scrR_\dt$ is a copy of $\scrR \equiv  \scrB_1(\scrH_\sys)$.
Define the 'time-ordering" $\caT$ as a linear operator $ \otimes_{\dt \in I_N} \scrB(\scrR_{\dt}) \to \scrB(\scrR): V \mapsto \caT [V]$ as follows. 
For elementary tensors $V= \otimes_{\tau \in I_N}V_\tau $ where  $V_\tau \in \scrR_{\tau}$, we simply put
\beq
\caT [V]  :=   V_N  \ldots V_2 V_1 
\eeq
and then we extend $\caT$ by linearity to the whole of $ \otimes_{\tau \in I_N} \scrB(\scrR_\tau)$. 
Now, take a subset $A \subset I_N$ and define the operator 
\beq
\bbE( B(A)):   \scrR_A \to \scrR_A, \qquad  \textrm{where} \, \,  \scrR_A :=    \otimes_{\dt \in A} \scrR_\dt
\eeq
as follows.  Let $A =\{ \tau_1, \ldots, \tau_m \}$, with the convention that $\tau_1 < \tau_2 < \ldots < \tau_m$, choose operators $S_\tau, S'_\tau$ in $\scrR_\tau, \scrR'_\tau$, respectively (here, $\scrR'_\tau$ is the dual space to $\scrR_\tau$) and let  $S_{A}=\otimes_{\tau \in A} S_{\dt} $ and  $S'_{A}= \otimes_{\tau \in A}  S'_{\dt} $   be elements of  $\scrR_A$ and $\scrR'_A$ , respectively . We define the operator   $ \bbE (B(A)) $ by giving its `matrix elements', namely
\baq
  \big\langle S'_A,    \bbE (B(A))  S_A \big\rangle^{} & := & \Tr \Big[ S'_{\dt_m}  B({\dt_m})       \left(\str S_{\dt_m}  \rangle    \langle  S'_{\dt_{m-1}} \str \otimes 1_\res \right)  \ldots \\[2mm]
&& \ldots      B({\dt_3})   \left(\str S_{\dt_3}  \rangle   \langle   S'_{\dt_2} \str \otimes 1_\res   \right)   B(\dt_2)    \left(\str S_{\dt_2}     \rangle \langle S'_{\dt_1}  \str \otimes 1_\res   \right)  B({\dt_1})       ( S_{\dt_1} \otimes \initialresfinite  ) \Big]
\eaq
where $\langle \cdot, \cdot \rangle$ on the LHS is the pairing between $\scrR'_A$ and $\scrR_A$ and, on the RHS,  the notation $\str A \rangle \langle A' \str$ for operators on $\scrR \sim \scrB_1(\scrH_\sys)$ was introduced following Assumption \ref{ass: fermi golden rule}.
Note   that $\bbE(B(A))=0$ whenever $A$ has only one element, since the operator $PB(\dt)P$ vanishes on tensor products of the form $S \otimes \initialresfinite $. 
Furthermore, the  correlation functions corresponding to sets $A$ and $A+\tau$ are copies of each other, but acting  on different spaces ($\scrR_A$ vs.\  $\scrR_{A+\tau}$).  This  follows from the stationarity of the reference states $\initialresfinite$ under the free reservoir dynamics. 
By expanding $U(\dt)=T(\dt)+B(\dt)$ for every $\dt$ in the expression for the reduced dynamics \eqref{def: Z}, we arrive at
\beq
Z_N =       \sum_{A  \subset  I_N }   \caT   \left[ \left( \mathop{\bigotimes}\limits_{ \dt \in I_N \setminus A} T(\dt) \right)   \bigotimes  \bbE(B(A) )     \right]  \label{eq: Z from correlation functions} 
\eeq
On the RHS, the first tensor  acts on $\scrR_{I_N \setminus A}$ and the second on $\scrR_{A}$. The time-ordering  $\caT$ makes sure the operators are 'contracted' in the right way.   Note that the order in which we write the tensors in expressions like \eqref{eq: Z from correlation functions}  does not have any significance.  Instead, the 'legs' of the tensor product on which the operators act are indicated by the arguments $\tau$ and $A$. 
\subsubsection{Connected correlation functions}

The "connected correlation functions", denoted by $\bbE^c(B(A))$, are defined to be operators on $\scrR_A$ satisfying
\beq
  \bbE(B(A')) =   
   \sum_{\scriptsize{\left.\begin{array}{c}  \caA \, \textrm{partitions of}\, A'    \end{array} \right.  }}  \left( \mathop{\bigotimes}\limits_{A \in \caA}  \bbE^c(B(A))  \right)   \label{def: cumulants}
\eeq
The tensor product in this formula makes sense since $ \scrR_{A'} = \otimes_{A \in \caA} \scrR_A$ whenever $\caA$ is a partition of $A'$.
Note that this definition of connected correlation functions reduces to the usual probabilistic definition when all operators that appear are numbers and the tensor product can be replaced by multiplication. 
Just as in the probabilistic case, the relations \eqref{def: cumulants} for all sets $A'$ fix the operators $\bbE^c(B(A))$ uniquely since the formula \eqref{def: cumulants} can be inverted. 
%inversion formula ....
%\beq
% \bbE^c(B(A'))  =   \sum_{\scriptsize{\left.\begin{array}{c}  \caA \, \textrm{partitions of}\, A'   \end{array} \right.  }}   \left( \mathop{\bigotimes}\limits_{A \in \caA}  \bbE(B(A))  \right)
%\eeq

  With this machinery in place, we can write a neat expression for the reduced dynamics $Z_N$; 
\baq
Z_N      
&=& T^N +    \sum_{\caA  \in \textrm{Pol}(N)}   Z_N(\caA), \qquad \textrm{with}   \quad    Z_N(\caA):=       \caT   \left[ \left( \mathop{\bigotimes}\limits_{\tau \in I_N  \setminus \supp \caA} T(\dt) \right)   \bigotimes  \left(\mathop{\bigotimes}\limits_{A \in \caA} \bbE^c(B(A) ) \right)    \right]   \label{eq: Z from connected correlation functions}  
\eaq
where the \emph{polymer set} $\textrm{Pol}(N) $ is the set of non-empty collections $\caA$ of disjoint subsets  $A$ of $I_N$, and $\supp \caA= \cup_{A \in \caA} A $. 
Formula \eqref{eq: Z from connected correlation functions}   follows from \eqref{eq: Z from correlation functions} by substituting \eqref{def: cumulants}
 since, obviously, any polymer $\caA$ is a partition of $\supp \caA$. The term $T^N$ in \eqref{eq: Z from connected correlation functions}  originates from $A=\emptyset$ in \eqref{eq: Z from correlation functions}.

.

\subsubsection{Norms on $\scrB(\scrR_A)$} \label{sec: norms}
We  introduce a norm on the spaces $\scrB(\scrR_{A})$ that is appropriate for our bounds.  
Any operator $E $ on $\scrR_A$  can be written (usually in a non-unique way) as a finite sum of elementary tensors
\beq
E = \sum_\nu     E_\nu, 
\eeq
We  define
\beq \label{def: weird norm}
\norm E \norm_{\weird} := \inf_{\{ E_\nu \}}  \sum_{\nu}  \norm E_\nu \norm  
\eeq
where the infimum ranges over all such elementary tensor-representations of $E$.
The sole purpose of the norm $\norm \cdot \norm_{\weird}$ resides in the following estimate: for any $\caA \in \poly(N)$  and any operators $C_{\dt}$ on $\scrR_{\dt}$, we have
\beq
\left \norm \caT\left[       \left( \mathop{\bigotimes}\limits_{A \in \caA} \bbE^c( B(A))  \right)  \bigotimes   \left(  \mathop{\bigotimes}\limits_{\dt \in I_N \setminus \supp\caA } C_{\dt}   \right) \right]  \right \norm  \leq    
\mathop{\prod}\limits_{A \in \caA}  \norm \bbE^c( B(A)) \norm_{\weird}        \mathop{\prod}\limits_{\dt \in I_N \setminus \supp\caA } \norm C_{\dt} \norm
\label{eq: bound w norm}
\eeq
This follows immediately from the definition \eqref{def: weird norm}.

\subsection{Unitarity} \label{sec: conservation of probability}
We now examine the consequences of the conservation of probability,  i.e.\ unitarity of the propagators $U(\dt)$. 
\beq
 \Tr B (\dt)   \rho_{\sys\res}  =0, \qquad  \textrm{for any}  \qquad  \rho_{\sys\res} \in \scrB_1(\scrH_\sys \otimes \scrH_\res^{\La})
\eeq
This follows from the fact that 
\beq
 \Tr B (\dt)  \rho_{\sys\res} =    \Tr U (\dt)  \rho_{\sys\res}  -  \Tr (T(\dt)  \otimes 1) \rho_{\sys\res} =   \Tr U (\dt)  \rho_{\sys\res}  -  \Tr_\sys \Tr_\res U(\dt) \rho_{\sys\res}  =0
\eeq
Let $R$ be a one-dimensional projector on $\scrR$ of the form $R= \str \rho_\sys \rangle \langle 1 \str$, for some density matrix $\rho_\sys \in \scrB_1(\scrH_\sys)$ (i.e.  satisfying $\Tr \rho_\sys=1$). 
It follows that 
\beq
(1_{\scrB(\scrR_{A \setminus \max A})} \otimes R(\max A))  \,   \bbE( B(A)  ) =0, \qquad    (1_{\scrB(\scrR_{A \setminus \max A})} \otimes R(\max A))  \,   \bbE^c(  B(A )) =0   \label{eq: conservation of prob}
\eeq
where $R(\dt)$ is a copy of $R$ acting on $\scrR_\dt$ (here used for $\tau =\max A$)

\subsection{Infinite-volume setup} \label{sec: finite volume limits}

As remarked already, the above setup makes sense in finite volume $\La$ only. However, the operators $Z_N, T$ and the "correlation functions" $\bbE(B (A)),  \bbE^c(B(A))$ have well-defined thermodynamic limits, since these operators act on (tensor products of) the system space only.  More importantly, the relations 
\eqref{eq: Z from correlation functions}, \eqref{eq: Z from connected correlation functions} and \eqref{eq: conservation of prob} will remain valid in the thermodynamic limit.  
For the sake of explicitness, we put this in  a lemma
\begin{lemma} \label{lem: thermo limit of correlations}
The correlation functions $\bbE(B (A)),  \bbE^c(B(A))$, as introduced in Section \ref{sec: correlation functions}, depend on the volume $\La$ via the reference state $\initialresfinite$, and hence we should denote them by $\bbE^{\La}(B (A)),  \bbE^{\La,c}(B(A))$. However, the infinite-volume correlation functions
\beq
\bbE(B (A)):= \mathop{\lim}\limits_{\La \nearrow \bbZ^d} \bbE^{\La}(B (A)), \qquad  \bbE^{c}(B(A)) :=   \mathop{\lim}\limits_{\La \nearrow \bbZ^d}  \bbE^{c, \La}(B(A))
\eeq
exist and satisfy the equality \eqref{eq: conservation of prob}.
\end{lemma}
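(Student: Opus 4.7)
The strategy is to reduce the claim to the Dyson--Wick representation already sketched in Section \ref{sec: dyson discussion}, and then to push a term-wise convergence argument through exactly as in the proof of Lemma \ref{lem: thermodynamic limit}.

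First, I would unfold the definition of $\bbE^\La(B(A))$. For each $\tau\in A$, expand $B^\La(\tau)=U^\La(\tau)-T^\La(\tau)$ via the Duhamel formula around the free dynamics, exactly as on the RHS of \eqref{eq: derivation of dyson}, but restricted to the compact time window $[\nu(\tau-1),\nu\tau]$. Inserting these expansions into the definition of the matrix elements of $\bbE^\La(B(A))$, and using that the system operators $|S_\dt\rangle\langle S'_{\dt'}|\otimes 1_\res$ sandwiched between the $B^\La(\tau_k)$'s factor trivially on the reservoir, the partial trace reduces to one $\mathrm{Tr}_\res$ of a product of free-evolved field operators $\Phi(\cdot)$ against the Gaussian reference state $\initialresfinite$. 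Gaussianity \eqref{eq: gaussian property} plus gauge invariance then collapses this reservoir trace, at each fixed order, into a finite sum over pairings of products of two-point functions $f^\La_{i,j}(t)$ multiplied by $\La$-independent operators on $\scrH_\sys$.

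Second, I would establish term-by-term convergence. For a fixed order of the expansion and a fixed pairing, the integrand is a finite product of factors $f^\La_{i,j}(t_k-t_l)$ times a $\La$-independent system operator, integrated over a product of compact time simplices of total length at most $\nu|A|$. By hypothesis $f^\La_{i,j}\to f_{i,j}$ uniformly on compacts with $\sup_\La\sup_t|f^\La_{i,j}(t)|<\infty$, so Lebesgue's dominated (or even bounded) convergence gives convergence of this term to the obvious infinite-volume expression. This is precisely the mechanism that underlies Lemma \ref{lem: thermodynamic limit}, and summing/absorbing over pairings and orders follows the same bookkeeping as in that proof; in particular the full series and its limit are controlled uniformly in $\La$ on every finite-dimensional subblock of matrix elements, so we may interchange $\lim_\La$ with the summations. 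Since $\scrH_\sys$ is finite-dimensional, convergence of all matrix elements of the operator on $\scrR_A$ is convergence in operator norm, so $\bbE(B(A)):=\lim_{\La\nearrow\bbZ^d}\bbE^\La(B(A))$ exists.

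Third, the connected correlation functions $\bbE^{c,\La}(B(A))$ are finite alternating sums of tensor products of the $\bbE^\La(B(A'))$'s obtained by M\"obius inversion of \eqref{def: cumulants}. Since each $\bbE^\La(B(A'))$ converges in operator norm and the number of terms is $\La$-independent, $\bbE^c(B(A)):=\lim_{\La\nearrow\bbZ^d}\bbE^{c,\La}(B(A))$ exists as well. The identity \eqref{eq: conservation of prob} is a purely linear relation in the matrix elements of $\bbE^\La(B(A))$ and $\bbE^{c,\La}(B(A))$ (the operator $R$ acts only on $\scrR_{\max A}$ and is $\La$-independent) and it holds for every finite $\La$; taking limits gives the same relation for $\bbE(B(A))$ and $\bbE^c(B(A))$.

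The main obstacle is verifying that the Dyson--Wick series can be controlled uniformly in $\La$ well enough to justify exchanging $\lim_\La$ with the summations over expansion orders and Wick pairings. This is the same technical point that arises for Lemma \ref{lem: thermodynamic limit}, and is handled by the standard Gaussian estimates combined with the integrability supplied by Assumption \ref{ass: decay of correlations}; one does not need convergence of the Dyson series in the ordinary operator sense, only order-by-order convergence of the trace-class quantities together with a uniform remainder bound, both of which follow from $\sup_\La\sup_t|f^\La_{i,j}(t)|<\infty$ and the compactness of the integration domain $[\nu(\tau-1),\nu\tau]$ for each $\tau\in A$.
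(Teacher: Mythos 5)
Your proposal is correct and follows essentially the same route as the paper, which proves this lemma by declaring it analogous to Lemma \ref{lem: thermodynamic limit}: expand in a Dyson--Wick series over the compact time windows $\Dom_\nu(A)$, use the uniform-on-compacts convergence $f^\La_{i,j}\to f_{i,j}$ for term-by-term limits, and invoke the uniform-in-$\La$ absolute bound of type \eqref{eq: overall bound finite} to interchange the limit with the sums over orders and pairings; the passage to $\bbE^c$ by finite inversion of \eqref{def: cumulants} and to \eqref{eq: conservation of prob} by linearity is exactly as you describe.
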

In the remainder of Section \ref{sec: general}, when writing $\bbE(B (A)),  \bbE^c(B(A))$ we always mean the infinite-volume quantities.  The proof of Lemma \ref{lem: thermo limit of correlations} is analogous to the proof of Lemma \ref{lem: thermodynamic limit}, which is contained in Section \ref{sec: combinatorics}. 

\subsection{Abstract result} \label{sec: abstract result}

We now state two assumptions that allow us to prove the convergence to a stationary state.
In Section 3 we verify these assumptions for the model considered in Section 1.

Our first assumption concerns the decay of truncated correlation functions. We demand that $k$-point truncated correlation functions have a sort of tree-graph decay reminiscent of high temperature Gibbs states. Let $A =(\dt_1,\ldots, \dt_k)$ with $\dt_i < \dt_{i+1}$ and put 
\beq
\dist_\zeta (A)= \dist_\zeta (\underline{\tau}):=   \prod_{i=1}^{k-1} \zeta(\str \dt_{i+1}-\dt_{i}\str) 
\eeq
with $\zeta$ a nondecreasing function satisfying \eqref{eq: subadditivity} and \eqref{eq: zeta subexponential}

\begin{assumption}[Summable correlations] \label{ass: integrability}
There is a  constant $\ep<\infty$  such that 
\beq
\sup_{\tau_0 \in I_N}\, \mathop{\sum}\limits_{\scriptsize{\left.\begin{array}{c}  A \subset I_N  \\   \max A=\tau_0   \end{array}\right.  }}  \ep^{-\str A \str}
\dist_\zeta (A)    \norm \bbE^c( B(A) )  \norm_{\weird}   \leq   1
\eeq
uniformly in $N$,
\end{assumption}

The next assumption expresses that the operator $T$ has a well-defined leading order contribution. 
\begin{assumption}[Dissipativity] \label{ass: dissipativity}
The operator $T$ has a simple eigenvalue $1$ corresponding to the one-dimensional spectral projector $R=\str \rho_\sys^T \rangle \langle 1 \str$ with $\rho_\sys^T$ a density matrix. Moreover, the rest of the spectrum lies inside a disk with radius $1-g$ with $g>0$. \end{assumption}
This assumption obviously implies that, for some constant $C_g<\infty$, 
\beq
\norm T^n(1-R) \norm  \leq C_g (1-g)^n, \qquad  \textrm{for any}\,  n \in \bbN
\eeq

Our main theorem states that under these assumptions, the system state approaches a unique limit as $N \to \infty$\bet \label{thm: main discrete}
If Assumptions \ref{ass: integrability} and  \ref{ass: dissipativity} hold with $\ep$ sufficiently small, then
there exists a density matrix $\rho_\sys^{inv}$ such that 
\beq
\lim_{N\to\infty}  Z_N =     \str \rho_\sys^{inv} \rangle\langle 1   \str  
\eeq
where 
\beq
 \left\norm  \rho_\sys^{T}-  \rho_\sys^{inv}   \right \norm ={\cal O}(\ep), \qquad   \ep \to 0
\eeq
The speed  of convergence is estimated as
\beq
 \left\norm  Z_N   -    \str \rho_\sys^{inv} \rangle\langle 1   \str  \right \norm  \leq  C_g  (1-g)^N + {\cal O}(\ep)\frac{1}{\zeta(N)},  \qquad   \ep \to 0   \label{eq: bound on decay to inv}
\eeq
\eet

\subsection{Proof of Theorem \ref{thm: main discrete}}  \label{sec: proof of main theorem}

\subsubsection{Summation of the polymer series}

We start from the representation
\beq
Z_N  =       T^N+   \sum_{\caA  \in \textrm{Pol}(N)}   Z_N(\caA)
\eeq
and we use the conservation of probability (Section \ref{sec: conservation of probability}) with 
\beq
R =  \str \rho^{T}_\sys \rangle\langle 1   \str, 
\eeq
where $\rho^{T}_\sys $ is the unique invariant state of the map $T$.   
It follows that 
\beq
Z_N(\caA)  =    \caT\left[       \left( \mathop{\bigotimes}\limits_{A \in \caA} \bbE^c( B(A))  \right)  \bigotimes   \left(  \mathop{\bigotimes}\limits_{\dt \in  \hook(\caA)} T(\dt)  (1-R)     \right)  \bigotimes \left(   \mathop{\bigotimes}\limits_{I_N \setminus (\supp \caA  \cup  \hook(\caA) )}  T(\dt)    \right)\right]    \label{eq: Z_N with hooks}
\eeq
where the set of times $\hook(\caA)$ is determined as follows. For each set $A \in \caA$, pick the smallest time in $\supp\caA \setminus A$ that is larger than $\max A$ and call this time $\dt_{\hook}(A)$.  If there is no such time (which happens for exactly one $A$,  namely the one for which $\max A= \max \supp \caA$),  then set $\dt_{\hook}(A)=N+1$.   We also define the sets 
\beq
\hook(A) := \left\{\max A+1, \max A+2, \ldots, \dt_{\hook}(A)-1 \right \}, \qquad \qquad    \hook(\caA) := \bigcup_{A \in \caA} \hook(A) 
\eeq
(in case  $\max A+1= \dt_{\hook}(A) $, we set $\hook(A):= \emptyset$.)
In words, the set $\hook(A)$ is simply the set of times between the latest time of $A$ and the next element that belongs to some $A' \in \caA$. See Figure \ref{fig: connected1} for a graphical representation of a term $Z_N(\caA)$. 

At this point, there is no evident relation between operators $Z_N(\caA)$ for different $N$.  However, let $m(\caA)=\min\supp(\caA)$ denote the earliest time included in the polymer $\caA$ and take a $\caA \in \poly(N') $ with $m(\caA) > N'-N$, for $N'>N$. Then, the polymer $\caA=\{ A_1, \ldots, A_m\}$ can be converted into a polymer in $\poly(N) $, 
namely  $ \caA -(N'-N) :=   \{A_1-(N'-N), \ldots, A_m-(N'-N) \}$ where  $A_i-(N'-N):= \{ \tau- (N'-N), \tau  \in A_i \} $.
By putting the projector $R$ on the right of $Z_N(\caA)$, we obtain then that
\beq
Z_{N'}( \caA) R =   Z_{N}( \caA -(N'-N) \})  R,  \qquad \textrm{if}  \, \,  m(\caA) > N'-N     \label{eq: relation between different N}
\eeq
This is a straightforward consequence of the translation invariance of the correlation functions.
Hence the difference between $Z_N R$ and $Z_{N'} R$ is made by polymers that are ``too extended" to be fitted into $I_N$ ($m(\caA)$ is too small); we have 
\baq
 &&(Z_{N'}- Z_N) - (T^{N'}-T^N)      \label{eq: crude difference}
 \\[1mm]
&& \qquad =      \mathop{\sum}\limits_{\scriptsize{\left.\begin{array}{c}  \caA \in \poly(N')  \\    m(\caA) \leq N'-N   \end{array}\right.  }} 
  Z_{N'}(\caA) R + 
 \mathop{\sum}\limits_{\scriptsize{\left.\begin{array}{c}  \caA \in \poly(N')    \end{array}\right.  }}  Z_{N'}(\caA) (1-R)-   \mathop{\sum}\limits_{\scriptsize{\left.\begin{array}{c}  \caA \in \poly(N)    \end{array}\right.  }}  Z_{N}(\caA) (1-R)  \nonumber     \eaq

\begin{lemma} \label{lem: bound on delta}
If  $\ep$  is small enough, then
\beq    \label{eq: bound on delta}
 \sum_{\caA  \in \textrm{Pol}(N)} \norm Z_N(\caA)  \norm  ={\cal O}(\ep), \qquad   \ep \to 0
\eeq
and 
\beq     \label{eq: delta excited to zero}
\lim_{N \to \infty} \sum_{\caA  \in \poly(N)} \norm Z_N(\caA)(1-R)  \norm = 0 
\eeq
\end{lemma}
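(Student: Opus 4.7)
The plan is to combine the polymer representation \eqref{eq: Z_N with hooks} with a Mayer-type cluster estimate. The first step is to obtain a pointwise bound on $\|Z_N(\caA)\|$ by applying the tensor-norm inequality \eqref{eq: bound w norm} together with Assumption \ref{ass: dissipativity}; the latter also yields $\|T^n\|\leq 1+C_g$ via the decomposition $T^n=R+T^n(1-R)$, so that each hook of length $\ell$ collapses to a single factor $C_g(1-g)^\ell$ and the leading string of $T$'s preceding $\supp\caA$ contributes only an overall constant. I would then exploit the identity (insensitive to any interleaving of the sets in $\caA$)
\beq
|\hook(\caA)|=N-\mu+1-|\supp\caA|,\qquad \mu:=\min\supp\caA,
\eeq
which follows from disjointness of the $\hook(A)$'s and from $\hook(\caA)\cup\supp\caA=\{\mu,\ldots,N\}$, and factorize $(1-g)^{-|\supp\caA|}=\prod_A(1-g)^{-|A|}$ into renormalized per-set activities $\tilde z_A:=C_g(1-g)^{-|A|}\|\bbE^c(B(A))\|_{\weird}$.

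The key activity bound
\beq
\sup_{\tau_0}\sum_{A:\max A=\tau_0}\tilde z_A\leq C_g\epsilon_1^2,\qquad \epsilon_1:=\epsilon/(1-g),
\eeq
follows directly from Assumption \ref{ass: integrability} after using $\dist_\zeta(A)\geq 1$ and the fact that $\bbE^c(B(A))=0$ unless $|A|\geq 2$. To prove \eqref{eq: bound on delta} I would group polymers by $\mu$, drop the constraint $\min\supp\caA=\mu$ for an upper bound, and apply the standard cluster inequality $\sum_{\caA\in\poly(M)}\prod\tilde z_A\leq\bigl(\sum_A\tilde z_A\bigr)e^{\sum_A\tilde z_A}$ (from $e^x-1\leq xe^x$) on the window of size $M=N-\mu+1$. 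Setting $r:=(1-g)e^{C_g\epsilon_1^2}$, this yields
\beq
\sum_{\caA\in\poly(N)}\|Z_N(\caA)\|\leq C\,\epsilon_1^2\sum_{j\geq 1}j\,r^{j},
\eeq
which is $O(\epsilon^2)=O(\epsilon)$ uniformly in $N$, provided $\epsilon$ is chosen small enough that $r<1$.

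For \eqref{eq: delta excited to zero}, I would merge the extra $(1-R)$ on the right of $Z_N(\caA)$ with the string of $T$'s preceding $\supp\caA$: for $\mu>1$ this produces an additional factor $C_g(1-g)^{\mu-1}$ via Assumption \ref{ass: dissipativity}, while for $\mu=1$ one simply pays the constant cost $\|1-R\|\leq 2$. In either case the combined decay exponent becomes $|\hook(\caA)|+(\mu-1)=N-|\supp\caA|$, producing a global $(1-g)^N$ after the same $(1-g)^{-|A|}$ absorption. The same cluster bound then gives
\beq
\sum_{\caA}\|Z_N(\caA)(1-R)\|\leq C(1-g)^N\bigl(e^{NC_g\epsilon_1^2}-1\bigr)\leq C'N\epsilon^2\,r^N\longrightarrow 0
\eeq
as $N\to\infty$. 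The only analytic obstacle is arranging $r<1$, which forces the explicit smallness condition on $\epsilon$ (somewhat stronger than $\epsilon_1<1$). Everything else is combinatorial and relies crucially on the fact that the hook identity above does not see the possible interleaving of the sets in $\caA$ — that is what permits the clean factorization into per-set activities amenable to Mayer resummation.
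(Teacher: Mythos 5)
There is a genuine gap, and it sits exactly at the step you flag as the crux of your argument. The identity $|\hook(\caA)|=N-\mu+1-|\supp\caA|$ is false: it is \emph{not} true that $\hook(\caA)\cup\supp\caA=\{\mu,\ldots,N\}$. By definition $\hook(A)$ only covers the times strictly between $\max A$ and the next element of $\supp\caA$; times lying in an interior gap of a polymer set are covered by neither the support nor any hook and receive a plain $T(\dt)$ of norm $1$ in \eqref{eq: Z_N with hooks}. Already for $\caA=\{\{1,N\}\}$ one has $\hook(\caA)=\emptyset$ and $|\supp\caA|=2$, while your identity would predict $|\hook(\caA)|=N-2$; in the paper's own Figure example the time $12$, sitting between $11$ and $13$ inside $A_2=\{5,10,11,13\}$, is likewise uncovered. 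What is true is only the inequality $|\hook(\caA)|\leq N-\mu+1-|\supp\caA|$, and since $0<1-g<1$ this gives $(1-g)^{|\hook(\caA)|}\geq(1-g)^{N-\mu+1-|\supp\caA|}$ --- the wrong direction. Your factorization into per-set activities $\tilde z_A=C_g(1-g)^{-|A|}\|\bbE^c(B(A))\|_{\weird}$ together with a global factor $(1-g)^{N-\mu+1}$ therefore replaces the correct bound by something smaller, and the whole Mayer resummation rests on an invalid step. The genuine decay carried by a widely separated polymer such as $\{1,N\}$ comes from $\dist_\zeta(A)$ inside Assumption \ref{ass: integrability}, not from powers of $(1-g)$, and your scheme conflates the two.

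The error is not merely formal: it leads to a quantitatively false conclusion in the second part. For $\caA=\{\{1,N\}\}$ one has $m(\caA)=1$, so $Z_N(\caA)(1-R)$ acquires no $(1-g)$ factor at all, and its norm decays only like $\|\bbE^c(B(\{1,N\}))\|_{\weird}\sim 1/\zeta(N-1)$, which by \eqref{eq: zeta subexponential} is subexponential. Hence your claimed bound $C'N\ep^2 r^N$ with $r<1$ for the sum in \eqref{eq: delta excited to zero} cannot hold; this is consistent with the fact that the paper's final rate in \eqref{eq: bound on decay to inv} retains a $1/\zeta(N)$ term. The paper avoids the decoupling problem altogether: for \eqref{eq: bound on delta} it peels off the polymer set with maximal $\max A$, whose hook reaches to $N$, and sets up the recursion $\delta_N\leq\frac{\ep C_g}{g}(1+\delta_{N-1})$ in which the sum over $\max A$ is controlled by the geometric factor $(1-g)^{N-\max A}$ and the sum over the remaining structure of $A$ by Assumption \ref{ass: integrability}; for \eqref{eq: delta excited to zero} it keeps only the extra factor $(1-g)^{m(\caA)-1}$ from the leading string of $T(1-R)$'s and concludes by a soft argument combining translation invariance of the polymer weights with the uniform summability, with no exponential rate claimed. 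Your treatment of that leading string (and the $\mu=1$ case costing only $\|1-R\|$) is fine; it is the hook bookkeeping that must be replaced by an argument, such as the paper's recursion, that respects the fact that the decay attached to a polymer set depends on where the \emph{next} set begins.
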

\begin{proof}

First, we state a crucial bound that follows immediately form \eqref{eq: Z_N with hooks} by \eqref{eq: bound w norm} and the fact that $\norm T^n (1-R)  \norm \leq C_g(1-g)^n $ and $\norm T \norm =1$;  namely
\beq
\norm Z_N(\caA)  \norm \leq    \prod_{A \in \caA}   \norm \bbE^c( B(A))\norm_{\weird}  \, C_g (1-g)^{\str \hook(A)  \str} 
\eeq
Put 
\beq
 \delta_N   :=   \sum_{\caA  \in \textrm{Pol}(N)}    \prod_{A \in \caA}   \norm \bbE^c( B(A))\norm_{\weird}  \, C_g  (1-g)^{\str \hook(A)  \str}  \label{def: delta}
\eeq
Note that this sum over $\caA  \in \textrm{Pol}(N)$  can be read as the sum over all $\caA$ such that exactly one  of the $A \in \caA$ has $\dt_{\hook}(A)=N+1$, and for all other $A$, $\dt_{\hook}(A) \leq N$. Moreover, the next largest $\dt_{\hook}(A'), A' \in \caA$ necessarily belongs to $A$. We call this  next largest hook time $a \equiv \dt_{\hook}(A')$.   It follows that \beq \label{eq: induction on Z_N}
\delta_N  \leq   \sum_{A\subset I_N}    \norm \bbE^c ( B(A)) \norm_\weird   \, C_g (1-g)^{N-\max (A)}  \prod_{a \in A}  (1+\delta_{a})  \eeq
where we used that $\str \hook(A)  \str = N-\max A$.  
Next, observe that  $\delta_{N}$ is non-decreasing in $N$, hence one can replace $\delta_a$ on the RHS of \eqref{eq: induction on Z_N} by $\delta_{N-1}$. Using $g>0$ to control the sum over $\max(A)$ and assuming that $(1+\delta_{N-1})\ep\leq 1$, we get 
\beq
\delta_N \leq   \frac{ C_g }{g} 
\sup_{\tau_0 \in I_N}\, \mathop{\sum}\limits_{\scriptsize{\left.\begin{array}{c}  A \subset I_N  \\   \max A=\tau_0   \end{array}\right.  }}  (1+\delta_{N-1})^{\str A \str}
  \norm \bbE^c( B(A) )  \norm_{\weird} \leq
  \frac{\ep C_g }{g}  (1+\delta_{N-1}) 
\eeq
where in the last inequality Assumption \ref{ass: integrability} was used.  
Since $\delta_2 \leq \ep^2$ by inspection, we get by induction in $N$ that $\delta_N\leq  \frac{2\ep C_g }{g}$
for $\ep$ small enough. Hence eq.  \eqref{eq: bound on delta} holds.

To show \eqref{eq: delta excited to zero}, we first remark that 
\beq
\textrm{LHS of  \eqref{eq: delta excited to zero} } \leq \mathop{\sum}\limits_{\caA \in \poly(N)}    \norm Z_N(\caA) \norm C_g  (1-g)^{m(\caA)-1} 
\eeq
since we get $T(1-R)$  on the first $m(\caA)-1$ factors in the time ordered expression for $Z_N$.  Next, we set
\beq
v_{\leq}(N,n) := \mathop{\sum}\limits_{\scriptsize{\left.\begin{array}{c}  \caA \in \poly(N) \\ m( \caA) \leq n   \end{array}\right.  }}    \norm Z_N(\caA) \norm,   \qquad  v_{>}(N,n) := \mathop{\sum}\limits_{\scriptsize{\left.\begin{array}{c}  \caA \in \poly(N) \\ m( \caA) > n   \end{array}\right.  }}    \norm Z_N(\caA) \norm
\eeq
The relations between polymers for different $N$ implies that $v_{>}(N,n)= v_{>}(N+k,n+k)$ for any $k \in \bbN$. Moreover, the summability  \eqref{eq: bound on delta} implies that there is a  $\overline{v} < \infty$ such that
\beq
\overline{v} = \lim_{N \to \infty}  v(N), \qquad    v(N) :=  v_{\leq}(N,n) + v_{>}(N,n) \qquad  (\textrm{independently of $n$}) 
\eeq
Given $\kappa>0$, we choose  $N(\ka)$ such that $\str v(N(\kappa))-\overline{v}   \str \leq \kappa$ and hence $v_{\leq}(N+k,k) < \kappa$ for any $N \geq N(\kappa)$ and any $k \in \bbN$. 
Hence
\baq
\sum_{\caA \in \poly(N+k)}    \norm Z_N(\caA) \norm C_g(1-g)^{m(\caA)}   &\leq  & v_{>}(N+k, k))  C_g(1-g)^{k} +  v_{\leq}(N+k, k))     \\
   &\leq  &  \overline{v}\,  C_g (1-g)^{k} +  \kappa  
\eaq
As $k \to \infty$, this bound equals $\kappa$. Since $\kappa$ is arbitrary, this proves \eqref{eq: delta excited to zero}.

\end{proof}

\subsubsection{Convergence towards the steady state}

From the bounds in \eqref{eq: crude difference} and  Lemma \ref{lem: bound on delta}, we obtain
\beq
\limsup_{N' \to \infty}  \norm Z_N - Z_{N'} \norm   \leq   \limsup_{N' \to \infty}  \mathop{\sum}\limits_{\scriptsize{\left.\begin{array}{c}  \caA \in \poly(N') \\ m( \caA) \leq N'-N   \end{array}\right.  }}    \norm Z_{N'} (\caA) R \norm +      \mathop{\sum}\limits_{\scriptsize{\left.\begin{array}{c}  \caA \in \poly(N)    \end{array}\right.  }}   \norm Z_{N} (\caA)\norm   C_g  (1-g)^{ m( \caA)}   +  C_g (1-g)^N
\label{eq: bound on diff}
\eeq
  We will now estimate the two first terms on the RHS of \eqref{eq: bound on diff} multiplied by the factor $\zeta(N)$.  Note first  that for any  $\caA \in \poly(N')$ with $m( \caA) \leq N'-N$, 
  \beq
  N < N'-m(\caA) +1 = \sum_{A \in \caA}   \str \max A - \min A \str +   \sum_{A \in \caA}    \str \hook(A) \str,
  \eeq
  and hence, by property \eqref{eq: subadditivity} of the function $\zeta(\cdot)$, 
\beq  \label{eq: splitting the zeta}
\zeta(N) \leq \prod_{A \in \caA}   \dist_\zeta(A) \times  \zeta(\str \hook(A) \str).
\eeq
Hence 
\baq
\zeta(N)  \mathop{\sum}\limits_{\scriptsize{\left.\begin{array}{c}  \caA \in \poly(N') \\ m( \caA) \leq N'-N   \end{array}\right.  }}    \norm Z_{N'} (\caA) \norm     &\leq &  \mathop{\sum}\limits_{\scriptsize{\left.\begin{array}{c}  \caA \in \poly(N') \end{array}\right.  }}  \prod_{A \in \caA}       \dist_\zeta(A) \norm E^c(B(A) \norm_{\weird}  \zeta(\str \hook(A) \str)    \, C_g  (1-g)^{\str \hook(A)  \str}  \nonumber \\
&\leq&    \mathop{\sum}\limits_{\scriptsize{\left.\begin{array}{c}  \caA \in \poly(N')    \end{array}\right.  }}   \prod_{A \in \caA} c(\zeta,g)^{\str A \str}  \dist_{\zeta}(A) \norm \bbE^c ( B(A)) \norm_{\weird}   \,  \sqrt{1-g}^{\str \hook(A)  \str} 
\label{eq: clever bound diff}
 \eaq
where we have put (using that $\zeta$ is subexponential, see \eqref{eq: zeta subexponential})
 \beq  \label{def: c zeta}
c(\zeta,g):=\sup_{n \geq 1 }  \,  \zeta(n) ( \sqrt{1-g})^{n}  C_g   \qquad   \eeq
Note that we dropped the restriction that $m( \caA) \leq N'-N$ since it was only necessary for \eqref{eq: splitting the zeta}. 

Consequently, we have derived a bound, \eqref{eq: clever bound diff}, for the first term on the RHS of \eqref{eq: bound on diff} (since $\norm Z_N(\caA) R\norm  \leq \norm Z_N(\caA) \norm $).  We will now derive a similar bound  for the second  term on the RHS of  \eqref{eq: bound on diff}.

Instead of  \eqref{eq: splitting the zeta}, we use here that 
\beq  \label{eq: splitting the zeta 2}
\zeta(N) \leq     \zeta(m( \caA)) \times \prod_{A \in \caA}   \dist_\zeta(A) \times  \zeta(\str \hook(A) \str) , \qquad \textrm{for}\,  \caA \in \poly(N)
\eeq
and  we obtain the same bound as in \eqref{eq: clever bound diff}, except that this time we get $c(\zeta,g)^{\str A \str+1}$ instead of $c(\zeta,g)^{\str A \str}$ because of the presence of the term $\zeta(m(\caA))$ in \eqref{eq: splitting the zeta 2}.

Next, we show that \eqref{eq: clever bound diff} (or the analogous bound for the second term on the RHS of \eqref{eq: bound on diff}) can be bounded by $O(\ep)$, for $\ep$ small enough. 
To achieve this, we proceed in exactly the same way as in the proof of Lemma \ref{lem: bound on delta}, except that here;
\begin{itemize}
\item We include the factor $\dist_\zeta(A) $ in the weight  $\norm \bbE^c(B(\dt_{A})) \norm$, which is permitted by Assumption \ref{ass: integrability}.
\item For each set $A$, there is an additional factor $c(\zeta,g)^{\str A \str} $ (or $c(\zeta,g)^{\str A \str+1}$), which can be handled by choosing $\ep$ smaller.
\item  The factor $(1-g)$ is replaced by $\sqrt{1-g}$, which again forces $\ep$ to be smaller. 
\end{itemize}

To conclude the proof of Theorem \ref{thm: main discrete}, it remains to show that $Z_{\infty} := \lim_{N \nearrow \infty} Z_N$ is of the form $\str \rho_\sys^{inv} \rangle\langle 1   \str $. This follows from the fact that, by the bound  \eqref{eq: delta excited to zero}, only the terms $Z_N(\caA)R$ contribute to $Z_{\infty}$.
  \qed

\section{Discretization of the physical system}   \label{sec: discretization}

We explain now how the setup of Section \ref{sec: introduction} fits into the framework of Section \ref{sec: general}.   In Section \ref{sec: expansions}, we introduce and estimate the Dyson expansion and we present the construction of the operators $\bbE^c(B(A))$ from the microscopic model.  In Sections \ref{sec: treegraph} and \ref{sec: dissipativity}, we check Assumptions \ref{ass: integrability} and \ref{ass: dissipativity} starting from Assumptions \ref{ass: decay of correlations} and \ref{ass: fermi golden rule}.

\subsection{Expansions} \label{sec: expansions}

To save notation, we present the  case where there is only one element in the sum \eqref{def: interaction ham}  defining the interaction Hamiltonian, i.e.\ $\str \caI \str=1$ and we can write $D_i=D, \varphi_i=\varphi$. 
The general case can be treated in essentially the same way, we indicate the changes at the end of Section \ref{sec: combinatorics} and in the proof of Lemma \ref{lem: a priori}.

We expand the reduced dynamics, introduced in Section \ref{sec: general}; 
\baq
Z_N \rho_\sys &=& P U(N) U(N-1) \ldots U(1) P  \rho_\sys  = \Tr_{\res} \left[ (   \e^{-i \nu N H^{\La}_\la }    (\rho_\sys \otimes \initialresfinite)   e^{i  \nu N  H^{\La}_\la }   ) \right] 
\eaq
 in a Dyson series.

\subsubsection{Dyson expansion} \label{sec: dyson}

We let 
\beq
D ( t)  :=\e^{\i t \adjoint(H_\sys)}   D \e^{-\i t \adjoint(H_\sys)}
\eeq
and we recall the \emph{left} and \emph{right multiplication operators} $M_{\links}(A), M_{\rechts}(A)$ introduced in \eqref{eq: multiplication operators}.
Define the operator products 
\beq
K(\ut, \underline{k})   =   M_{k_{2n}}(\i D ( t_{2n})) \ldots M_{k_{1}}(\i D( t_1)), \qquad   \norm K(\ut, \underline{k})  \norm \leq \norm D \norm^{2n}     \label{eq: bound on K}
\eeq
with $\ut=(t_1,\ldots, t_{2n})$ is an ordered sequence of times $0< t_1 < \ldots <t_{2n}<t$ and $\underline{k} =(k_1,\ldots, k_{2n})$ is a sequence in $\{\links, \rechts \}$.  Next, recall the correlation functions $f^{\La}(t)$ (the labels $i,j$ have been dropped because of the simplification $\str \caI\str=1$)
and define 
%\beq
%G^{\La}(\ut, \underline{k})    :=  \sum_{pairings \, \pi}\prod_{(r,s) \in \pi}  \left\{ \begin{array}{ccc}        \overline{f}^{\La}(t_r-t_s)     \quad \,     &\textrm{if} &    k_{s}    =\rechts \\[2mm]
%  f^{\La}(t_r-t_s)  &  \textrm{if} &     k_{s}=\links    \\
% \end{array} \right\}   
%   \left\{ \begin{array}{ccc}        f_{j,i}(t)        &\textrm{if} &    k_{1}=\links, \, k_2  =\rechts \\[2mm]
% \overline{f_{j,i}(t)}   &  \textrm{if} &      k_{1}=\rechts, \, k_2  =\links    \\[2mm]
% f_{j,i}(t)   &  \textrm{if} &      k_{1}=\links, \, k_2  =\links    \\[2mm]
%   \overline{  f_{j,i}(t)}   &  \textrm{if} &      k_{1}=\rechts, \, k_2  =\rechts    \\
% \end{array} \right.
%\eeq
\beq
G^{\La}(\ut, \underline{k})    :=  \sum_{pairings \, \pi}\prod_{(s,r) \in \pi}  \left\{ \begin{array}{ccc}    \la^2   f^{\La}(t_r-t_s)        &\textrm{if} &    k_{s}=\links \\[2mm]
   \la^2   \overline{f^{\La}(t_r-t_s) }    &  \textrm{if} &      k_{s}=\rechts  
 \end{array} \right.
\eeq
(recall the convention that $s<r$ in the pairing $\pi$). Finally,  we introduce the free $\sys$-dynamics
\beq
W_{\dt} := \e^{- \i \nu \dt  \adjoint (H_\sys)}
\eeq
 Then, we are ready to state the Dyson expansion for $Z_N$;
  \beq
Z_N =    W_{N} \sum_{n \in \bbN}   \mathop{\sum}\limits_{\underline{k} }  \mathop{\int}\limits_{0< t_1 < \ldots < t_{2n} < N \nu}  \d \ut  \,  K(\ut, \underline{k})   G^{\La}(\ut, \underline{k})  \label{eq: first dyson expansion}
\eeq
where the term with $n=0$ is defined to be $1$. 
This expression can be checked easily by expanding the evolution operator $\e^{-\i t \adjoint (H^{\La}_\la)}$ in powers of $\la$ and  writing the expectation values of the field operators in two-point contributions by the Wick theorem (this gives rise to the factor
 $G^{\La}(\cdot)$).    Even though the 'perturbation' $H^{\La}_{\inter}$ is unbounded, it is straightforward to check that the RHS equals the LHS, provided that the sum and integral on the RHS of \eqref{eq: first dyson expansion} converge absolutely, as will be derived in Section \ref{sec: combinatorics}.
 
 Some explicit intermediary steps of the derivation of \eqref{eq: first dyson expansion} were given in Section \ref{sec: dyson discussion}.  Here we arranged the ordered sets of times $\underline{s}, \underline{s}' $ (cfr.\ \eqref{eq: derivation of dyson})  into one ordered set of times $\ut$ and we used the $k$-labels to keep track of whether a time appeared in the expansion to the left or to the right of the density matrix.

%These manipulations are commonplace in quantum field theory. 

\subsubsection{A formalism for the combinatorics} \label{sec: combinatorics}

The integral over ordered $\ut$, together with the sum over  \emph{left-right specifiers} $\underline{k}$
and
pairings, $\pi$,  on the set of times, is represented as an integral/sum over ordered pairs $ ((u_i,k_i^{u}),  (v_i, k_i^v))$ with $u_i,v_i \in \bbR^+$ and $k_i^{u}, k_i^{v} \in \{\links, \rechts\}$ and $i=1, \ldots, n$,  such that 
\beq
u_i < v_i, \qquad    u_1< \ldots < u_n
\eeq
This is done as follows. For any pair $(r,s) \in \pi$, we let $u_i=t_r, v_i =t_s$ and $k_i^{u}=k_{r}, k_i^{v}=k_{s}$ where the index $i=1,\ldots,n$ is chosen such that the $u_i$ are ordered: $u_1 < u_2 \ldots < u_n$. 
We represent one pair $ ((u_i,k_i^{u}),  (v_i, k_i^v))$ by the symbol $w_i$ and the $n$-tuple of them by $\uw$. 
We call $\Om_J$, with $J \in \bbR^+$, the set of  $\uw$ such that  $u_i,v_i \in J$  (for arbitrary $n$), and we use the shorthand
\beq
\mathop{\int}\limits_{\Om_J} \d \underline{w}  : = \mathop{ \sum}\limits_{n \geq 0} \,  \mathop{ \sum}\limits_{k_i^{u}, k_i^{v}} \,\mathop{ \int}\limits_{J^n} \d \underline{u} \mathop{ \int}\limits_{J^n} \d \underline{v}    \, \,  \chi[u_i <v_i] \chi[u_1 < \ldots < u_n]
\eeq
These new coordinates are illustrated in Figure \ref{fig: micro0}, where we relate them to the $\underline{s}, \underline{s}' $-coordinates that were used in Section \ref{sec: dyson discussion}.
\begin{figure}[h!]  
\psfrag{nul}{ $0$}
\psfrag{time}{$t$}
\psfrag{le1}{ $s_1$}
\psfrag{le2}{ $s_2$}
\psfrag{le3}{ $s_3$}
\psfrag{le4}{ $s_4$}
\psfrag{ri1}{ $s'_1$}
\psfrag{ri2}{ $s'_2$}
\psfrag{ri3}{ $s'_3$}
\psfrag{ri4}{ $s'_4$}
\psfrag{ri5}{ $s'_5$}
\psfrag{ri6}{ $s'_6$}
\includegraphics[width = 15cm, height=5cm]{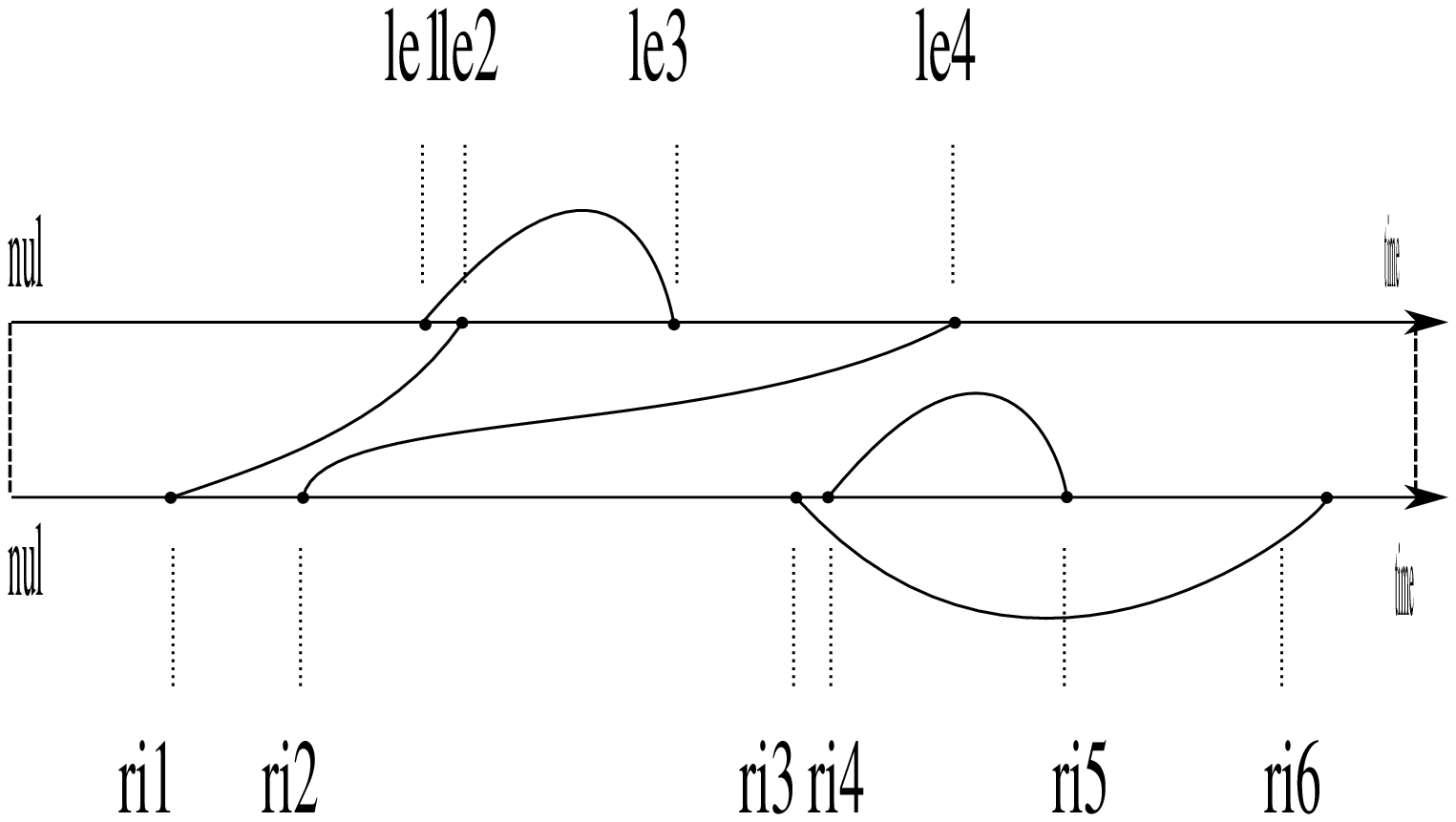}   
\caption{\footnotesize{Example of a term ("a diagram") in the Dyson expansion.  Each bent line carries a scalar factor $f(\Delta s)$ or $\overline{ f(\Delta s)}$ where $\Delta s$ is the difference between the times at both ends of  that line.  The dots at the end of the line carry operator valued factors $D(s), D(s')$.  In the example, $n_\links=4$ and $n_\rechts=6$. We  order the pairs according to their first time, which is called $u_i$  and the corresponding later time is called $v_i$.  The index $i$ is fixed by requiring $u_i \leq u_{i+1}$. With these $(u_i,v_i)$-coordinates, the above diagram would correspond to 
$
(u_1,v_1)=(s'_1 , s_2), (u_2,v_2)=(s'_2 , s_4)    , (u_3,v_3)=(s_1 , s_3), (u_4,v_4)=(s'_3 , s'_6)   ,  (u_5,v_5)=(s'_4 , s'_5)     
$
The left/right labels $k^{u}_i, k^{v}_i $ would then be $k^{v}_1, k^{v}_2, k^{u}_3,k^{v}_3 = \links$ and $k^{u}_1,k^{u}_2,k^{u}_4,k^{u}_5, k^{v}_4,k^{v}_5= \rechts$. That is, whenever a time is on the upper line, it has the label $\links$ and, whenever it is on the lower line, it has the label $\rechts$. 
 }}  \label{fig: micro0}
\end{figure}

In this new notation,  we can write a simple term-by-term bound on the Dyson expansion, using the bound in \eqref{eq: bound on K} 
\beq
\sum_{n \in \bbN}   \mathop{\sum}\limits_{\underline{k} }  \mathop{\int}\limits_{0< t_1 < \ldots < t_{2n} < t}  \d \ut  \,  \norm K(\ut, \underline{k}) \norm  \left \str  G^{\La}(\ut, \underline{k}) \right\str \leq \mathop{\int}\limits_{\Om_{[0,t]}}   \d \uw \,  \,\prod_{i=1}^n  \, \la^{2}\,   (\norm  D\norm)^{2} \,  \str f^{\La}(v_i-u_i) \str \label{eq: bound on K in w} \eeq
 The advantage of the last formula is that the sum over pairings is now represented by the integrals over $u_i, v_i$.   
One can first perform the integrals over $v_i$, bounding each of them by  $\la^2 \norm D \norm^2  \int_0^t \d s \str f^{\La}(s)\str $, and then the integral over $u_i$ gives $t^{n}/n!$, i.e.\  the volume of the $n$-dimensional simplex.   Hence  \eqref{eq: bound on K in w} is bounded by
\beq
\exp\left(  4\la^2 \norm D \norm^2   t  \int_0^t \str f^{\La}(s)\str  \d s \right)   
 \label{eq: dysonbound}
 \eeq
where the factor $4$ comes from the sum over $k_i^{u},k_i^{v} $. 
Since the functions $f^{\La}_{i,j}(\cdot)$ converge to $f_{i,j}(\cdot)$, uniformly on compacts,  the Dyson expansion converges absolutely and uniformly in the volume $\Lambda$;
\beq
\sum_{n \in \bbN}   \mathop{\sum}\limits_{\underline{k} }  \mathop{\int}\limits_{0< t_1 < \ldots < t_{2n} < t}  \d \ut  \,  \norm K(\ut, \underline{k}) \norm  \left \str  G^{\La}(\ut, \underline{k}) \right\str \leq   \exp\left( \la^2 t  ( \norm h \norm_1+ o(\str\La\str^0))\right) , \qquad  \textrm{as} \, \La \nearrow \bbZ^d   \label{eq: overall bound finite}
\eeq
where the positive function $h (t) $ was defined in \eqref{def: function h} for a general index set $\caI$.
It  follows that (here we indicate explicitly the $\La$-dependence of $Z_N$)
\beq
\lim_{\La \nearrow \bbZ^d} Z^{\La}_N  = W_N   \sum_n  \mathop{\int}\limits_{0 < t_1 <\ldots < t_{2n} < \nu N } \d \ut  \, \sum_{\underline{k}}    \,  K(\ut,\underline{k})    G(\ut, \underline{k}) 
\eeq
where $ G(\ut, \underline{k}) = \lim_{\La \nearrow \bbZ^d}  G^\La(\ut, \underline{k}) $ is obtained by simply replacing $f^\La(\cdot)$ by $f(\cdot)$. 
We have now proven Lemma \ref{lem: thermodynamic limit}, up to the simplification that $\str \caI \str=1$. However, one can trivially check that in the case $\str \caI \str >1$, the bound \eqref{eq: overall bound finite} remains true. 
Finally, we introduce the correlation function 
\beq
G(\underline{w})    :=  \prod_{((u,k^{u}),(v,k^{v})) }  \left\{ \begin{array}{ccc}    \la^2   f(v-u)        &\textrm{if} &    k^{u}=\links \\[2mm]
    \la^2  \overline{f(v-u) }    &  \textrm{if} &      k^{u}=\rechts  
 \end{array} \right.  \label{eq: correlation function w}
\eeq
where the product runs over the pairs $((u,k^{u}),(v,k^{v}))$ in $\uw$. 

\subsubsection{Connected correlations and the Dyson series} \label{sec: treegraph}

To relate the previous sections to the setup in Section \ref{sec: general}, we need to discretize time and express the operators $\bbE^c(B(A))$ in terms of the Dyson series.

 Recall that $I_N= \{ 1, \ldots, N \}$  is the set of macroscopic times.  To a set $A \subset I_N$ of macroscopic times, we associate the  domain of microscopic times
 \beq
 \Dom_\nu (A)= \bigcup_{\tau \in A}  [\nu(\tau-1), \nu\tau] 
 \eeq
 A set of pairs $\uw \in \Om_{ \Dom_\nu (A)}$ determines a graph $\caG_A(\uw)$ on $A$ by the following prescription: the vertices $\dt< \dt'$ are connected by an edge if and only if there is pair $w=((u,k^{u}),(v,k^{v}))$ in $\uw$ such that 
\beq
u \in  \Dom_\nu ( \dt) =   [ \nu (\dt-1),  \nu \dt ] \qquad  \textrm{and} \qquad     v \in  \Dom_\nu ( \dt') =  [ \nu (\dt'-1),  \nu \dt' ]
\eeq 
We write $\supp(\caG_A(\uw))$ for the set of non-isolated vertices of $\caG_A(\uw)$, i.e.\ the vertices that have at least one connection to another vertex.  (this set  $\supp(\caG_A(\uw))$ is obviously a subset of $A$).

Now, our task is to connect the Dyson  expansion in sets of pairs with the discrete-time expansion that was introduced in Section \ref{sec: general}.
The connection is via 
\baq
 && \left( \mathop{\bigotimes}\limits_{\dt \in A} W_{-\dt} \right)  \bbE(B(A))   \left( \mathop{\bigotimes}\limits_{\dt \in A} W_{\dt-1} \right)  \\[3mm]
&&  \qquad = \mathop{\int}\limits_{\scriptsize{\left.\begin{array}{c}  \uw \in \Om_{\Dom_\nu (A)}   \\   \supp(\caG_A(\uw)) =A    \end{array}\right.  }}  \,   \d \uw   \,   \,    \caI_{\tau(t_{2n})}\left[ M_{k_{2n}}(\i D(t_{2n})) \right]  \ldots \caI_{\tau(t_1)}\left[ M_{k_1}(\i D(t_1)) \right]   \,   G(\uw) 
\eaq
where  $ \caI_{\tau}$ is the embedding that  maps operators on $\scrR$ into operators on $\scrR_\tau$, and $\tau(t)$ assigns to each microscopic time $t$ the right "interval" $\tau$ (that is $\tau =\tau(t) \leftrightarrow t \in  \Dom_\nu ( \dt)$).  The coordinates $t_i,k_i$ on the RHS are determined implicitly by the $\uw$ as explained in Section \ref{sec: combinatorics}.  The product of correlation functions $ G(\uw)  $ has been defined in \eqref{eq: correlation function w}. 

The above formula follows immediately from the definition of $B(\cdot)$. Indeed, sets of pairs $\uw$ such that $\supp(\caG_A(\uw))$ is strictly smaller than $A$, say $\tau \in A \setminus  \supp(\caG_A(\uw))$, can be resummed to give $T(\tau)$, and hence they do not contribute to $B(\tau)$.  Indeed; 

\beq
W_{-\tau}T(\dt)W_{\tau-1}= \mathop{\int}\limits_{\scriptsize{\left.\begin{array}{c}  \uw \in \Om_{\Dom_\nu(\dt)}   \end{array}\right.  }}  \,   \d \uw   \,   \,    \caI_{\tau(t_{2n})}\left[ M_{k_{2n}}(\i D(t_{2n})) \right]  \ldots \caI_{\tau(t_1)}\left[ M_{k_1}(\i D(t_1)) \right]   \, G(\uw)   \label{eq: T in terms of caI}
\eeq
where now all ${\tau}(t_j)=\tau$ and hence all embeddings $\caI_{{\tau}(t_j)}$ are into $\scrR_\tau$.

 Next, we state a useful formula for the "truncated correlation functions".
\begin{lemma} \label{lem: truncated is spanning} 
Assume the simplification $\str \caI\str=1$, then 
\baq
 &&  \left( \mathop{\bigotimes}\limits_{\dt \in A} W_{-\dt} \right)  \bbE^c(B(A))   \left( \mathop{\bigotimes}\limits_{\dt \in A} W_{\dt-1} \right) \nonumber
  \\[3mm]
 && \qquad  =  \mathop{\int}\limits_{\scriptsize{\left.\begin{array}{c}  \uw \in \Om_{ \Dom_\nu (A)}   \\  \caG_A(\uw) \, \textrm{connected}    \end{array}\right.  }}  \d \uw \, \,  \caI_{\tau(t_{2n})}\left[ M_{k_{2n}}(\i D(t_{2n})) \right]  \ldots \caI_{\tau(t_1)}\left[ M_{k_1}(\i D(t_1)) \right]   \, \, \,  G(\uw)   \eaq
\end{lemma}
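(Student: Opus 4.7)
The plan is to derive the claim directly from the integral representation of $\bbE(B(A))$ displayed immediately before the lemma, by decomposing each configuration $\uw$ along the connected components of $\caG_A(\uw)$, and then to match the result against the defining identity \eqref{def: cumulants}, which determines the family $\{\bbE^c(B(A))\}$ uniquely via Möbius inversion on the partition lattice of $A$.

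First, fix $A \subset I_N$. Any $\uw \in \Om_{\Dom_\nu(A)}$ with $\supp\caG_A(\uw) = A$ (the condition appearing in the formula for $\bbE(B(A))$) determines a unique partition $\caA = \{A_1,\ldots,A_m\}$ of $A$, namely the collection of connected components of $\caG_A(\uw)$. Conversely, for each partition $\caA$ of $A$, the $\uw$'s whose graph has precisely those connected components are exactly the disjoint unions $\uw = \uw_1 \sqcup \ldots \sqcup \uw_m$ with $\uw_i \in \Om_{\Dom_\nu(A_i)}$ and $\caG_{A_i}(\uw_i)$ connected. In particular, this partitions the integration domain appearing in the formula for $\bbE(B(A))$ into a disjoint union indexed by partitions of $A$.

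Next I would check that the integrand factorizes in the same way. The scalar weight $G(\uw)$ is a product over the individual pairs in $\uw$, so $G(\uw_1 \sqcup \ldots \sqcup \uw_m) = \prod_i G(\uw_i)$. The operator factor $\caI_{\tau(t_{2n})}[M_{k_{2n}}(\i D(t_{2n}))] \cdots \caI_{\tau(t_1)}[M_{k_1}(\i D(t_1))]$ requires more care: each factor acts non-trivially on a single tensor factor $\scrR_{\tau(t_j)}$ of $\scrR_A = \bigotimes_{\tau \in A}\scrR_\tau$, so factors associated with different macroscopic times commute. This permits regrouping so that all factors whose time lies in $\Dom_\nu(A_i)$ are bunched together, while the internal time-order within each bunch is preserved. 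After regrouping, the operator factor equals $\bigotimes_{i} K(A_i,\uw_i)$, where $K(A_i,\uw_i)$ is precisely the operator kernel that appears on the RHS of the lemma applied to $A_i$.

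Combining the two factorizations, and observing that $\bigotimes_{\tau \in A} W_{-\tau}$ and $\bigotimes_{\tau \in A} W_{\tau-1}$ also split along any partition of $A$, one obtains an identity of the form
\beq
\Big(\bigotimes_{\tau \in A} W_{-\tau}\Big)\bbE(B(A))\Big(\bigotimes_{\tau \in A} W_{\tau-1}\Big) \;=\; \sum_{\caA\text{ partition of }A}\; \bigotimes_{A_i \in \caA} C(A_i),
\eeq
where $C(A_i)$ denotes the operator produced by the RHS of the lemma for the set $A_i$. Comparison with \eqref{def: cumulants} and uniqueness of Möbius inversion on the partition lattice of $A$ then force $C(A) = (\bigotimes_{\tau \in A} W_{-\tau})\bbE^c(B(A))(\bigotimes_{\tau \in A} W_{\tau-1})$, which is the statement of the lemma. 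The main obstacle I anticipate is the operator-reordering step: one has to ensure that the commutation between operators indexed by different macroscopic times is applied consistently, and that the internal time-ordering within each block $A_i$ matches exactly the ordering prescribed on the RHS. The combinatorial side—the bijection between all-vertices-non-isolated graphs on $A$ and pairs (partition $\caA$, connected graph on each block of $\caA$)—is otherwise completely standard.
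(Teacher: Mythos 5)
Your proposal is correct and follows essentially the same route as the paper: decompose the integration domain for $\bbE(B(A))$ according to the partition of $A$ into connected components of $\caG_A(\uw)$, observe that the integrand factorizes over the blocks, and conclude by the uniqueness of the solution of the recursion \eqref{def: cumulants}. The only difference is that you make explicit the operator-regrouping step (commutation of factors acting on distinct tensor legs $\scrR_\tau$), which the paper leaves implicit by writing the integrand simply as $F(\cup_{A\in\caA}\uw_A)$.
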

See Figure \ref{fig: diagrammicro} for two examples of $\uw$ that contribute to  $ \bbE^c(B(A)) $ for a given $A$. 
\begin{figure}[h!]  
\psfrag{one}{ $1$}
\psfrag{nul}{ $0$}
\psfrag{last}{ $N$}
\psfrag{time}{ $t$}
\psfrag{t1}{ $\nu$}
\psfrag{t2}{ $2\nu$}
\begin{center}
\includegraphics[width = 10cm, height=6cm]{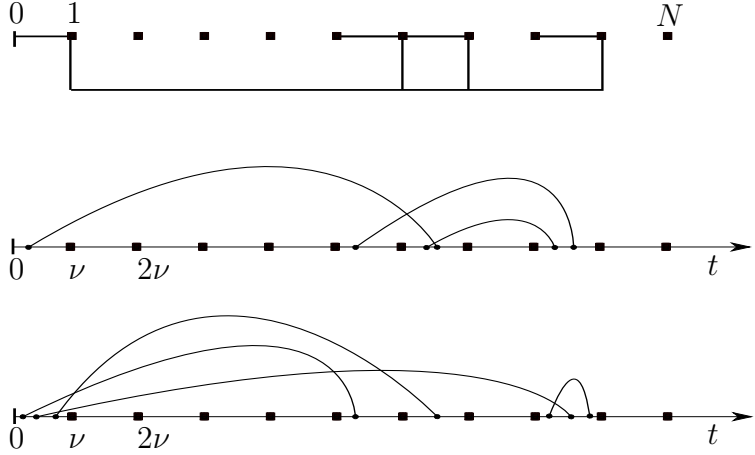}   
\caption{\footnotesize{We illustrate the discretization of the perturbation series.  The numbers $0,1,\ldots, N$ refer to macroscopic times measured in units of $\nu$.  The top drawing shows (by the graph under the horizontal axis)  the set $A=\{ 1,6,7,9 \}$. This set in fact refers to the time-intervals that end in these points, i.e.\ the intervals that are marked by horizontal lines.  In the two bottom pictures,  we have drawn two examples of diagrams (without indicating the $k$-labels) that induce a connected graph on $A$ and hence contribute to $\bbE^c(B(A))$. }}  \label{fig: diagrammicro}
\end{center}
\end{figure}

\begin{proof}

Let 
\beq
F(\uw) := K(\ut,\underline{k})    G(\uw)    \label{def: function F}
\eeq
where, again, the $\ut,\underline{k}$ are determined implicitly by $\uw$, as explained in Section \ref{sec: combinatorics}.
Obviously, we have the decomposition, 
\baq
&&  \mathop{\int}\limits_{\scriptsize{\left.\begin{array}{c}  \uw \in \Om_{ \Dom_\nu (A')}    \\   \supp(\caG_{A'}(\uw)) =A'    \end{array}\right.  }}   \d \uw \,  F(\uw)  \\
&& =   \sum_{\scriptsize{\left.\begin{array}{c}  \caA \, \textrm{partitions of}\, A'    \end{array} \right.  }}    \left[\prod_{A \in \caA} 
 \mathop{\int}\limits_{\scriptsize{\left.\begin{array}{c}  \uw_A \in \Om_{ \Dom_\nu (A)}   \\  \caG_A(\uw_A) \, \textrm{connected}    \end{array}\right.  }} 
 \d \uw_A  \right]      F(\mathop{\cup}\limits_{A \in \caA}  \uw_A)
\eaq
where  $\mathop{\cup}\limits_{A \in \caA}  \uw_A$ is well-defined since $\uw_A$ are sets (of  pairs of times). 
Since the connected correlation functions are uniquely defined by \eqref{def: cumulants}, the claim of the lemma follows.
\end{proof}

\subsection{Estimates} \label{sec: estimates}

In this Section, we verify Assumption \ref{ass: integrability} of Section \ref{sec: general}. 
The following estimate is an immediate consequence of Lemma  \ref{lem: truncated is spanning} and  the bounds at the end of Section \ref{sec: dyson}. 
\begin{lemma} \label{lem: a priori}

\beq  \label{eq: minimal spanning bound}
 \norm \bbE^c( B(A))  \norm_{\weird}   \leq     \e^{ \la^2 \norm h \norm_1   \str \Dom_\nu(A) \str  } \int_{\Om_{ \Dom_\nu (A)} }\d \uw  \, \,       \left( \prod_{i=1}^n  \la^2 h(v_i-u_i) \right) \,   \chi[ \underline{w} \, \textrm{spans} \, A\, \textrm{minimally}]
\eeq
where 
%we introduced the constant $\norm h \norm_1 := 2\sum_{i,j} \norm D_i \norm \norm D_j \norm  \int_{\bbR^+} \str f_{i,j}(s)\str   $ 
 the statement "$\uw$ spans $A$ minimally" means that $\caG_A(\uw)$ is connected and that no pair can be dropped from $\uw$ without losing this property. In particular, this implies that  
 $\caG_A(\uw)$ is a spanning tree on $A$. 
\end{lemma}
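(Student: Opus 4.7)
The plan is to combine the explicit representation of $\bbE^c(B(A))$ from Lemma \ref{lem: truncated is spanning} with a tensor-product bound for $\norm\cdot\norm_\weird$, and then perform a Mayer-type resummation that converts the constraint ``$\caG_A(\uw)$ is connected'' into the constraint ``$\uw$ spans $A$ minimally'' at the cost of the advertised Dyson exponential.

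The conjugation by $\bigotimes_{\tau\in A} W_\tau$ appearing on the left of the formula in Lemma \ref{lem: truncated is spanning} is a $\weird$-norm isometry, since each $W_\tau$ is a unitary conjugation on the single tensor leg $\scrR_\tau$; hence it suffices to bound the $\weird$-norm of the integrand on the right-hand side. For each fixed $\uw$ the time-ordered product $\caI_{\tau(t_{2n})}[M_{k_{2n}}(\i D(t_{2n}))]\cdots \caI_{\tau(t_1)}[M_{k_1}(\i D(t_1))]$ regroups as the elementary tensor $\bigotimes_{\tau\in A}\caI_\tau[P_\tau(\uw)]$, where $P_\tau(\uw)$ is the time-ordered product of those factors $M_{k_i}(\i D(t_i))$ whose times lie in $\Dom_\nu(\tau)$; this regrouping is legitimate because operators supported on different tensor legs commute. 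Being elementary, its $\weird$-norm is bounded by $\prod_\tau \norm P_\tau(\uw)\norm \leq \norm D\norm^{2n}$. Pulling $\norm\cdot\norm_\weird$ under the integral (triangle inequality for operator-valued integrals) and using $\str G(\uw)\str \leq \prod_w \la^2\str f(v-u)\str$ yields
\beq
\norm \bbE^c(B(A))\norm_\weird \;\leq\; \int_{\uw:\,\caG_A(\uw)\text{ connected}}\d\uw\,\prod_w \la^2 h(v-u).
\eeq

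The main step is the reduction of ``connected'' to ``minimally spans $A$''. For each connected $\uw$ I would select a canonical spanning subset $\uw_0\subseteq\uw$, for instance the lexicographically least minimally spanning subset with respect to some fixed edge enumeration, and write $\uw_e:=\uw\setminus\uw_0$. This gives an injection of connected configurations into pairs $(\uw_0,\uw_e)$ where $\uw_0$ minimally spans $A$ and $\uw_e\in\Om_{\Dom_\nu(A)}$ is unconstrained; since the integrand is nonnegative, dropping the ``$\uw_0$ is canonical for $\uw_0\cup\uw_e$'' constraint is an upper bound. The combinatorial factors of the ordered-pair parametrization of $\Om$ cancel cleanly between the two integrations (exactly as in the standard ``cluster expansion equals exponential of cluster integral'' identity), giving
\beq
\norm \bbE^c(B(A))\norm_\weird \;\leq\; \Bigl(\int_{\uw_0\text{ min spans }A}\d\uw_0\,\prod \la^2 h\Bigr)\cdot\Bigl(\int_{\Om_{\Dom_\nu(A)}}\d\uw_e\,\prod \la^2 h\Bigr).
\eeq
The second factor is precisely the Dyson-type estimate \eqref{eq: dysonbound} applied on the interval $J=\Dom_\nu(A)$, and is therefore bounded by $\e^{\la^2\norm h\norm_1\str \Dom_\nu(A)\str}$; combined with the first factor this is the desired inequality.

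The only genuinely delicate point is the canonical-tree selection and the verification that the cancellation of $1/n!$-factors against the subset-choice binomials lines up correctly; the rest is packaging. The $\weird$-norm tensor bound is immediate from the definition \eqref{def: weird norm}, the operator norm bound $\norm M_k(\i D(t))\norm\leq\norm D\norm$ is trivial, and the exponential factor has already been computed in Section \ref{sec: combinatorics}. The case $\str \caI\str>1$ is handled identically, the only change being that each pair $w$ carries indices $i,j\in\caI$ which are summed against $\norm D_i\norm\norm D_j\norm$ and absorbed into the definition of $h$.
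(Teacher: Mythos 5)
Your proposal is correct and follows essentially the same route as the paper: represent $\bbE^c(B(A))$ via the connected-diagram integral of Lemma \ref{lem: truncated is spanning}, bound the integrand in $\norm\cdot\norm_{\weird}$ by $\prod_i \la^2 h(v_i-u_i)$, split each connected $\uw$ into a minimally spanning subset plus an unconstrained remainder, and absorb the remainder integral into the Dyson exponential \eqref{eq: dysonbound}. The only (harmless) cosmetic difference is that you make the spanning-subset selection canonical to get an injection, where the paper selects non-canonically and simply notes that the resulting over-counting only strengthens the upper bound; your extra detail on the elementary-tensor regrouping behind the $\weird$-norm estimate is a correct filling-in of a step the paper leaves implicit.
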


\begin{proof}
Let $F(\cdot)$ be as in \eqref{def: function F}, then 
\beq
 \mathop{\int}\limits_{\scriptsize{\left.\begin{array}{c}  \uw \in \Om_{ \Dom_\nu (A)}   \\  \caG_A(\uw) \, \textrm{connected}    \end{array}\right.  }}  \d \uw  \,  \norm F(\uw) \norm  \quad   \leq \quad   \mathop{\int}\limits_{\scriptsize{\left.\begin{array}{c}  \uw' \in \Om_{ \Dom_\nu (A)}   \\   \underline{w}' \, \textrm{spans} \, A\, \textrm{minimally}    \end{array}\right.  }}   \d \uw'   \quad     \mathop{\int}\limits_{  \uw'' \in \Om_{ \Dom_\nu (A)} }  \d \uw''   \norm F(\uw' \cup \uw'') \norm
\eeq
This appealing estimate was the main motivation for encoding the pairings $\pi$ in the elements $\uw$.

%To realize why it holds true,  choose a spanning tree for the connected graph $\caG_A(\uw)$ and then  pick a $\uw' \subset \om$ such that for each edge  $(\tau,\tau')$ of $\caG_A(\uw)$, there is exactly one pair in $w' \in \uw'$ such that $\caG(w')=\{\tau,\tau' \}$.  THIS SENTENCE IS A BIT HAZY SO I 
%PROPOSE THE FOLLOWING:

  To realize why it holds true,  choose a spanning tree $\scrT$ for the connected graph $\caG_A(\uw)$ and then  pick a minimal subset $\uw'$ of the pairs in $  \uw$ such that $\caG_A(\uw')=\scrT$.    
  Since, in general, this can be done in a nonunique way, the integrals on the RHS contain the same $\uw$ more than once, and the inequality is strict unless $F$ is concentrated on minimally spanning $\uw$.  
Starting from this inequality, we now use the bound \eqref{eq: overall bound finite} to perform the integral over $\uw''$, using also  
 that $ \norm F(\uw) \norm \leq   \prod_{i=1}^n  \la^2 h(v_i-u_i) $.
\end{proof}

\noindent {\bf Remark.}  This lemma holds in the general case $\str \caI \str \geq 1$. To see this
one can first extend the setup slightly, by making $\uw$ a set of pairs $(u,k^{u},a^{u} ; v,k^{v},a^{v} )$ where $a^{u},a^{v} $ range over the elements in $\caI$. Then, the reasoning goes through in exactly the same way and at the end the sum over $a^{u}, a^{v}$ can be performed (note indeed that the function $h$  includes a double sum over $\caI$, see \eqref{def: function h}) such that the bound \eqref{eq: minimal spanning bound} remains true with the original definition of $\uw$.

\vskip 2mm

We now prove that Lemma \ref{lem: a priori} implies Assumption \ref{ass: integrability}. Recall the decay time $\frt_L$ associated to the Markov approximation by Assumption \ref{ass: fermi golden rule}. 
\begin{proposition}\label{prop: integrability}
Set the scale factor  $\nu \equiv \la^{-2}\ell$ where  $\la$ is the coupling strength and $\ell \in [\frt_L,2\frt_L]$. Fix also a nondecreasing function $\zeta(\cdot)$ satisfying \eqref{eq: subadditivity} and such that $\int \d t \zeta(t) h(t) < \infty $. 
   Then, for $\la$ small enough there exists $\ep=\ep(\lambda)$ with  $\ep(\lambda)\to 0$ as $\str \la \str \to 0$
   such that
   \beq  
\sup_{\tau_0 \in I_N} \, \mathop{\sum}\limits_{\scriptsize{\left.\begin{array}{c}  A \subset I_N  \\    \max(A) =\tau_0   \end{array}\right.  }} \, \ep^{-\str A \str} 
\dist_\zeta (A)   \norm \bbE^c( B(A) )  \norm_{\weird}   \leq   1    \label{eq: bound to be proven}
\eeq
uniformly for $\ell \in [\frt_L, 2 \frt_L]$ and $N$. 
 \end{proposition}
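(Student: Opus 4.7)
The plan is to use Lemma \ref{lem: a priori} as the primary estimate and reduce the verification of Assumption \ref{ass: integrability} to a cluster/polymer bound over rooted labeled trees, mirroring the induction used in the proof of Lemma \ref{lem: bound on delta}.

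\textbf{Step 1 (tree expansion for $\norm{\bbE^c}\norm_\weird$).} Apply Lemma \ref{lem: a priori}. The exponential prefactor equals $\exp(\ell \norm h\norm_1 |A|)$ and contributes a uniformly bounded geometric factor per element of $A$. The ``minimally spanning'' condition forces $\caG_A(\uw)$ to be a spanning tree on $A$ with exactly $|A|-1$ cross-interval pairs and no intra-interval pairs. Organizing the integral by the underlying tree $\scrT$ and using, for each edge $e=\{\tau,\tau'\}$ of macroscopic length $d:=|\tau-\tau'|\ge1$, the change of variables $s=v-u$ gives the per-edge bound
\[
\beta(d) \;:=\; \lambda^{2}\!\int_{\Dom_\nu(\tau)}\!du\int_{\Dom_\nu(\tau')}\!dv\,h(v-u) \;\le\; \ell\!\int_{\nu(d-1)}^{\nu(d+1)} h(s)\,ds,
\]
so that $\norm{\bbE^c(B(A))}\norm_\weird \le C^{|A|}\sum_{\scrT}\prod_{e\in\scrT}\beta(|e|)$ after absorbing the $4^{|A|-1}$ from $(k^u,k^v)$-sums and the factor from the ordering constraint into $C^{|A|}$.

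\textbf{Step 2 (distributing $\dist_\zeta(A)$ over tree edges).} For any spanning tree $\scrT$ on $A=\{\tau_1<\cdots<\tau_k\}$ and each consecutive pair $(\tau_i,\tau_{i+1})$, the unique $\scrT$-path has total length at least $\tau_{i+1}-\tau_i$; subadditivity of $\zeta$ therefore yields $\zeta(\tau_{i+1}-\tau_i)\le\prod_{e\in\mathrm{path}}\zeta(|e|)$. Multiplying over $i$ gives $\dist_\zeta(A)\le\prod_{e\in\scrT}\zeta(|e|)^{n_\scrT(e)}$ with path-multiplicities $n_\scrT(e)\le|A|$, which are absorbable into the per-vertex constant thanks to the $\ep^{-|A|}$ weight. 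Combined with the estimate $\zeta(d)\le\zeta(1)\zeta(s)$ valid for $s\in[\nu(d-1),\nu(d+1)]$ and $\nu\ge 1$ (from monotonicity and subadditivity of $\zeta$), the key summable quantity becomes
\[
\sum_{d\ge1}\zeta(d)\,\beta(d)\;\le\;3\zeta(1)\,\ell\!\int_{0}^{\infty}\zeta(s)h(s)\,ds \;<\;\infty,
\]
by the hypothesis $\int \zeta(t)h(t)\,dt<\infty$ (each $s$ is counted at most three times in $\sum_d$).

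\textbf{Step 3 (summation via rooted-tree induction).} The quantity to bound becomes a weighted sum over rooted labelled trees $\scrT$ with root $\tau_0$, vertex weight $C/\ep$, and edge weight $\zeta(|e|)\beta(|e|)$. I would control this by the same inductive scheme as in the proof of Lemma \ref{lem: bound on delta}: defining $\Sigma(\tau_0)$ as the target sum, conditioning on the set of neighbors of the root, each neighbor contributes a factor $2\sum_{d\ge1}\zeta(d)\beta(d)$ (the factor $2$ for forward/backward direction) times a sub-tree weight. One obtains $\Sigma(\tau_0)\le 1$ provided $\ep\ge\ep_{*}(\lambda):=C'\sum_{d\ge 1}\zeta(d)\beta(d)$ for an explicit constant $C'$ independent of $\lambda$.

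\textbf{Main obstacle.} The bound $\ep_*(\lambda)=O(\ell)$ produced by Step 2 is $\lambda$-independent because the $d=1$ contribution $\beta(1)\le\ell\norm h\norm_1$ does not vanish: the $\lambda^{2}$ in Lemma \ref{lem: a priori} is exactly absorbed by $\nu=\lambda^{-2}\ell$ on adjacent macroscopic intervals. Upgrading this to $\ep(\lambda)\to 0$ as required is the main technical point and must exploit the Markovian subtraction in $B(\tau)=U(\tau)-T(\tau)$: since $T(\tau)$ absorbs the leading Markovian dynamics $e^{\ell L}$ (cf.\ Section \ref{sec: dissipativity}), the one-edge excitation $\bbE^c(B(\{\tau,\tau+1\}))$ should carry an additional $o(\lambda^{0})$ prefactor coming from the weak-coupling error $T(\tau)-e^{\ell L}=o(1)$. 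Presumably the choice $\ell\in[\frt_L,2\frt_L]$ is used to calibrate this residual, and the quantitative improvement of $\beta(1)$ is what ultimately yields $\ep(\lambda)\to 0$; I expect this refinement to be where the proof works hardest.
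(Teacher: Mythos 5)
Your overall architecture (Lemma \ref{lem: a priori} $\to$ reorganization by spanning trees $\to$ a rooted-tree/cluster sum as in Lemma \ref{lem: combinatorics}) is the same as the paper's, and Steps 1 and 3 are essentially sound. But there are two genuine gaps. The decisive one is the ``main obstacle'' you flag and then leave unresolved: your guess that one must exploit the Markovian subtraction $B(\tau)=U(\tau)-T(\tau)$ and the weak-coupling error $T-\e^{\ell L}=o(1)$ points in the wrong direction. The subtraction of $T$ only serves to kill singleton excitations ($\bbE^c(B(\{\tau\}))=0$); it plays no role in the size of the adjacent-interval edge weight. The paper's fix is elementary and purely kinematic: for $d=1$ the set of $(u,v)$ in adjacent intervals with $v-u=s$ has Lebesgue measure at most $s$, so
\[
\hat e(\tau,\tau+1)\;\le\;\la^2\int_0^{2\nu}s\,h(s)\,\d s\;=\;2\ell\cdot\frac{1}{2\nu}\int_0^{2\nu}s\,h(s)\,\d s\;\longrightarrow\;0
\]
as $\la\to0$ (i.e.\ $\nu\to\infty$), because $\frac1t\int_0^t s\,h(s)\,\d s\to0$ for any integrable $h$ by dominated convergence. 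Your bound $\beta(1)\le\ell\int_0^{2\nu}h$ discards exactly the factor of $s$ that makes the whole proposition work; without it you cannot obtain $\ep(\la)\to0$, which is the entire content of the statement.

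The second gap is in Step 2. Concatenating $\scrT$-paths between consecutive elements of $A$ gives $\dist_\zeta(A)\le\prod_{e\in\scrT}\zeta(|e|)^{n_\scrT(e)}$ with multiplicities $n_\scrT(e)$ up to $|A|-1$, and these are \emph{not} absorbable into a per-vertex constant: the excess factor $\zeta(|e|)^{n_\scrT(e)-1}$ grows with the edge length, and the hypothesis only controls $\sum_d\zeta(d)\hat e(\tau,\tau+d)$, not $\sum_d\zeta(d)^{n}\hat e(\tau,\tau+d)$ for $n\ge2$ (take $\zeta(t)\sim t^\al$ and $h(t)\sim t^{-2-\al-\de}$ with $\al>\de$ to see the sum diverge already for $n=2$). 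The paper needs, and proves, the sharper inequality $\dist_\zeta(A)\le\prod_{e\in\caE(\scrT)}\zeta(|e|)$ with unit exponent on every edge; this is \eqref{eq: inequality spanning trees} and requires a genuine argument (an induction that removes a leaf of $\scrT$ and, when its edge is not in the linear tree, replaces it by a shorter edge of the linear tree), not just subadditivity applied path by path.
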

\begin{proof} 
Each  $\uw$ that spans $A$ minimally determines a spanning  tree on $A$. Hence we can reorganise the bound \eqref{eq: minimal spanning bound} by first integrating all $\underline{w}$ that determine the same tree. This amounts to integrate, for each edge of the tree, all pairs $(u,v)$ that determine this edge. Hence we arrive at the bound
\beq  \label{eq: spanning trees bound}
 \norm \bbE^c( B(A))  \norm_{\weird}   \leq     \e^{ \la^2 \norm h \norm_1     \str \Dom_\nu(A) \str}     \,   \sum_{ span. trees \, \scrT  \, on \, A} 
\prod_{  \{ \dt,\dt' \} \in \caE(  \scrT) }   \hat e(\dt,\dt')   
\eeq
where $\caE(\scrT)$ is the set of edges of the tree $\scrT$,
\baq
 \hat{e}(\dt,\dt') &:=&  \int_{  \nu[\dt, \dt+1]} \d v  \int_{  \nu[\dt', \dt'+1]}  \d v \,  \,  \la^2  h(v-u), \qquad \textrm{if}\, \,   \tau < \tau'
 \eaq
 and   $\hat{e}(\dt',\dt):= \hat{e}(\dt,\dt')$ for   $\tau' < \tau$.
To bound $ \hat{e}(\dt,\dt') $, we  use 
\baq
 \hat{e}(\dt,\dt')  &\leq & \la^2 \nu   \mathop{\int}\limits_{  \nu(\dt'-\dt-1)}^{  \nu(\dt'-\dt+1)} \d s  h(s), \qquad  \textrm{for} \quad      \dt'-\dt>1  \\[1mm]
  \hat{e}(\dt,\dt+1)    & \leq&  \la^2 \int_0^{2\nu} \d s   \, s h(s) 
\eaq
Using the integrability assumption $\int \d t \zeta(t) h(t) < \infty$, the inequality $  \zeta(\str \dt-\dt' \str) \leq \zeta(\str \dt-\dt' \str-1)\zeta(1) $ from \eqref{eq: subadditivity}, and the fact that $\zeta$ is nondecreasing,   we easily derive  that 
\beq
\sum_{\dt' \in \bbN \setminus \{ \dt \}}     \zeta(\str \dt-\dt' \str)  \hat e (\dt,\dt')  \leq  \ep_1, \qquad  \textrm{with} \qquad \ep_1=\ep_1(\la) \to 0 \quad \textrm{as}  \quad \str \la \str \to 0    \label{eq: integral over edges}  \eeq
 uniformly in $\ell \in [\frt_L,2 \frt_L]$.   The key observation here is that  for any integrable function $f(\cdot)$, 
\beq
 \frac{1}{t}\int_0^t \d s  \,  s f(s)   \quad  \mathop{\longrightarrow}\limits_{t \to \infty}  \quad 0
\eeq
which follows by the dominated convergence theorem applied to the sequence of functions $F_t(s):= \chi[s <t] \frac{s}{t} f(s)$, since $\forall s: \lim_{t \nearrow \infty} F_t(s)=0 $.     
Note that our bound \eqref{eq: integral over edges} is not very ambitious in that we did not try to use the fact that $\zeta$ increases already  on the microscopic scale.

We now turn to the sum in \eqref{eq: spanning trees bound}. Recall the distance function$\dist_{\zeta}(A)$ introduced in Section \ref{sec: abstract result}. Since $\zeta$ is an increasing function, the inequality
\beq
\dist_{\zeta}(A) \leq \prod_{ \{ \dt,\dt' \} \in \caE(\scrT) } \zeta(\str \dt'-\dt \str)    \label{eq: inequality spanning trees}
\eeq
holds for   any spanning tree $\scrT$ on $A$. Here is a procedure for checking \eqref{eq: inequality spanning trees}: Let $\caE_{\mathrm{lin}}$ be the set of edges of the linear tree, i.e.\  $ \caE_{\mathrm{lin}} = \{\{\tau_i, \tau_{i+1} \}, i=1, \ldots, \str A \str-1 \} $ where $\tau_1, \ldots, \tau_{\str A \str}$ are the time-ordered elements of $A$. Then we need to prove 
\beq \label{eq: edges set}
  \prod_{e \in \caE_{\mathrm{lin}}}  d_e  \leq    \prod_{e \in \caE(\scrT)}  d_e   =  d(\caE(\scrT)),  \qquad  d_{\{ \tau,\tau' \}}  =  \zeta(\str \dt'-\dt \str) 
\eeq
Choose a leaf $\tau $ of the tree $\scrT$ and let $\tau'$ be the unique vertex that shares an edge (that we call $e_\tau$) with $\tau$. If $e_\tau \in \caE_{\mathrm{lin}}$, then \eqref{eq: edges set} is equivalent to the claim where we replace $\caE_{\mathrm{lin}}$ by $\caE_{\mathrm{lin}} \setminus e_\tau$ and $\scrT$ by the tree with the vertex $\tau$ and the edge $e_\tau$ removed.  If $e_\tau \notin \caE_{\mathrm{lin}}$ then consider the tree $\tilde \scrT$ where the edge $e_\tau$ is removed and the edge $\tilde e_\tau =\{ \tau, \tilde \tau\}$ is added, where $\tilde \tau$ is chosen such that  $\tilde \tau- \tau$ has the same sign as $\tau'-\tau$ and $\tilde e_\tau \in \caE_{\mathrm{lin}}$.  Clearly, $d(\caE(\tilde\scrT)) \leq d(\caE(\scrT))$ and hence it suffices to prove  \eqref{eq: edges set} with $\scrT$ replaced by $\tilde \scrT$.   We can continue by induction. \\

\noindent Using  that $ \la^2 \str\Dom_\nu(A)\str \leq 2\frt_L \str A \str$  in the exponent on the RHS of \eqref{eq: spanning trees bound},  the LHS of \eqref{eq: bound to be proven} is bounded by
\baq
 \sup_{\tau_0 \in \bbN } \mathop{\sum}\limits_{\scriptsize{\left.\begin{array}{c}  A \subset \bbN:   A \ni \tau_0, \str A \str \geq 2 \end{array}\right.  }}  \, \,   \ep^{-\str A \str} \e^{  const \str A\str  }  \,    \sum_{ span. trees \, \scrT  \, on \, A}  \, \,
\prod_{ \{ \dt,\dt' \} \in \caE(\scrT) }  \zeta( \str\dt'-\dt\str)  \hat e(\dt,\dt')   \label{eq: sum over trees}
\eaq
(Note that we also replaced  the constraint $\max A=\tau_0$ by $A \ni \tau_0 $).   The sum over $A \ni \tau_0$ and spanning trees on $A$ is of course equivalent to a sum over trees containing $\tau_0$. The control of such sums is at the core of cluster expansions, see e.g.\ \cite{ueltschi}.  
We state a simple result, Lemma \ref{lem: combinatorics}, proven in a more general setting in \cite{ueltschi}.  \end{proof}

\begin{lemma}\label{lem: combinatorics}
Let $\hat v(\cdot, \cdot)$ be a symmetric and positive function on $\bbN\times \bbN$ such that 
\beq
\sup_{\tau}  \sum_{\tau' \in \bbN \setminus \tau } \hat v(\tau, \tau') \    \leq  \frac{\ka}{2} \e^{-\ka}, \qquad  \textrm{for some} \,  \ka  >0 
\eeq
Then
\beq  \label{eq: combi sum to control}
\sup_{\tau_0}   \mathop{\sum}\limits_{\scriptsize{\left.\begin{array}{c} \textrm{trees} \,   \scrT \, \textrm{on} \, \bbN    \\  1 < \str  \scrT \str < \infty, \scrT \ni \tau_0   \end{array}\right.  }}   \prod_{(\tau, \tau') \in \caE(\scrT)}   \hat v(\tau, \tau')  \leq    \ka 
\eeq
\end{lemma}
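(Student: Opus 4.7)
I would approach this by induction on tree size, using the standard rooted-tree decomposition. Define the truncated generating function
\beq
B_n(\tau_0) \, := \, 1 + \mathop{\sum}\limits_{\scriptsize{\left.\begin{array}{c} \textrm{trees} \, \scrT \, \textrm{on} \, \bbN \\ 1 < \str \scrT \str \leq n, \, \scrT \ni \tau_0 \end{array}\right.}} \, \prod_{(\tau,\tau') \in \caE(\scrT)} \hat v(\tau,\tau'),
\eeq
where the ``$+1$'' accounts for the trivial one-vertex tree $\{\tau_0\}$. The target bound \eqref{eq: combi sum to control} becomes $\sup_{\tau_0} (B_n(\tau_0) - 1) \leq \ka$ uniformly in $n$, from which the lemma follows by monotone convergence as $n \to \infty$ (monotonicity uses $\hat v \geq 0$).

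The heart of the argument is a recursion for $B_n$. Viewing any tree $\scrT \ni \tau_0$ of size $\geq 2$ as rooted at $\tau_0$, I would let $\{\tau_1, \ldots, \tau_k\}$ be the neighbors of $\tau_0$ in $\scrT$; removing $\tau_0$ splits $\scrT$ into $k$ rooted subtrees $\scrT_i \ni \tau_i$ of sizes $\leq n-1$, with pairwise disjoint vertex sets in $\bbN \setminus \{\tau_0\}$. Summing over $k \geq 0$ and over \emph{ordered} tuples $(\tau_1,\ldots,\tau_k)$ of distinct vertices with the usual factor $1/k!$ to correct for the ordering, and then relaxing the pairwise-disjointness constraint on the subtrees (which only enlarges the sum, since $\hat v \geq 0$), gives the key upper bound
\beq
B_n(\tau_0) \, \leq \, \mathop{\sum}\limits_{k \geq 0} \frac{1}{k!} \left( \mathop{\sum}\limits_{\tau' \neq \tau_0} \hat v(\tau_0, \tau') \, B_{n-1}(\tau') \right)^k \, = \, \exp\!\left( \mathop{\sum}\limits_{\tau' \neq \tau_0} \hat v(\tau_0, \tau') \, B_{n-1}(\tau') \right).
\eeq

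To close the induction, I would plug in the inductive hypothesis $B_{n-1}(\tau') \leq 1+\ka$ together with the assumption $\sup_\tau \sum_{\tau' \neq \tau} \hat v(\tau,\tau') \leq (\ka/2) e^{-\ka}$ to bound the exponent by $(1+\ka)(\ka/2)e^{-\ka}$. The induction step thus reduces to the elementary inequality $(1+\ka)(\ka/2)e^{-\ka} \leq \log(1+\ka)$ for $\ka > 0$, which is routine (both sides vanish at $\ka=0$, their derivatives satisfy the correct ordering there, and the inequality can be checked globally either by series comparison or numerically). The base case $B_1 \equiv 1$ is trivial.

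The only point that genuinely requires care is the combinatorial bookkeeping in the rooted-tree decomposition: one must justify the $1/k!$ from ordering the neighbors of the root and verify that dropping the subtree-vertex-disjointness constraint yields a valid upper bound on the positive integrand. The specific numerical form $(\ka/2)e^{-\ka}$ of the hypothesis is tuned precisely so that the fixed-point inequality $\exp\!\bigl((1+\ka)(\ka/2)e^{-\ka}\bigr) \leq 1+\ka$ closes, which is exactly what allows the target bound $B_n(\tau_0)-1 \leq \ka$ to propagate through the induction.
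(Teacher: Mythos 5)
Your argument is correct, and it is worth noting that the paper does not actually prove this lemma at all: it defers to the cluster-expansion literature (\cite{ueltschi}), where the standard proof is precisely the rooted-tree induction you carry out. Your decomposition at the root, the $1/k!$ bookkeeping for ordered tuples of distinct neighbours, and the relaxation of the distinctness/disjointness constraints (legitimate because $\hat v \geq 0$) are all sound, as is the remark that $B_{n-1}(\tau')$ must include the weight $1$ of the single-vertex subtree $\{\tau'\}$, which your ``$+1$'' convention handles. The one step you leave as ``routine'' is the fixed-point inequality $(1+\ka)\tfrac{\ka}{2}\e^{-\ka} \leq \log(1+\ka)$, and it does hold for all $\ka>0$, in fact with considerable slack: using $\log(1+\ka) \geq \tfrac{2\ka}{2+\ka}$ (both sides vanish at $0$ and the derivatives compare as $(2+\ka)^2 \geq 4(1+\ka)$), the claim reduces to $(1+\ka)(2+\ka)\e^{-\ka} \leq 4$, whose left-hand side is maximized at $\ka=(\sqrt{5}-1)/2$ with value $(2+\sqrt{5})\,\e^{-(\sqrt{5}-1)/2} \approx 2.3$. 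So the hypothesis $\tfrac{\ka}{2}\e^{-\ka}$ is not as finely tuned as you suggest; the factor $1/2$ is exactly what buys the room that makes the induction close, and your proof is a complete, self-contained replacement for the citation.
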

We apply this lemma to our case with $\hat v(\tau, \tau') = \ep^{-1} \zeta( \str\dt'-\dt\str)  \hat e(\dt,\dt') $ and $\ka = \e^{const} \frac{\ep_1}{\ep}$ for $\frac{\ep_1}{\ep} $ sufficiently small.  It follows that \eqref{eq: sum over trees} is bounded by $ {\cal O}(\frac{\ep_1}{\ep^2})$.   This means that one can take $\ep \to 0$ as $\la \to 0$, since $\ep_1(\la) \to 0$.

\subsection{The dissipativity condition on $T$} \label{sec: dissipativity}

In this section, we check Assumption \ref{ass: dissipativity}. Observe first that the reduced dynamics $T=T_{\la,\ell}$ depends on both $\la$ and $\ell$ (via the scale factor $\nu$).   
We quote a celebrated result (originally in \cite{davies1} under slightly more stringent assumptions, see \cite{duemcke,derezinskideroeck2} for the result as quoted here)
\bet[weak coupling limit]  \label{thm: weak coupling}
Assume that Assumption \ref{ass: decay of correlations} holds. Then, for any $\ell_{max}<\infty$, we have

\beq
\lim_{\la\to 0}\sup_{\ell < \ell_{max}}   \left\norm  \e^{ \i \ell \la^{-2}  \adjoint ( H_{\sys})  }  T_{\la,\ell}  - \e^{ \ell    L  }  \right\norm=0
\eeq
where  $L$ is the generator given in Section \ref{sec: introduction}. 
\eet
\noindent\textbf{Remark} In fact, the proof of Theorem \ref{thm: weak coupling} is rather straightforward if one uses the framework of the present paper.  In \eqref{eq: T in terms of caI}, one shows that the contribution of those $\uw$ that contain two pairs $(u_i,v_i), (u_j,v_j), i<j$ such that $u_j \leq  v_i$ (i.e.\ the pairs are 'entangled') vanishes in the limit $\la \to 0$. This follows from the same simple calculation as done in the proof of Proposition \ref{prop: integrability} to show that $\sum_{\tau'} \hat e(\tau,\tau')$ vanishes as $\la \to 0$.  Then, all what remains is to prove that,  the 'ladder diagrams' $\uw$, i.e.\ those for which $v_i < u_{i+1}$, sum up to produce the semigroup $\e^{\la^2t L}$, in the interaction picture and up to an error that vanishes as $t=\la^{-2} \frt, \la \to 0$. \\ 

By Assumption \ref{ass: fermi golden rule}, we know that the semigroup $\e^{ \frt L}$ is exponentially ergodic (see \eqref{eq: decay scale}). Since  the free dynamics  $\tilde W_{\ell}:= \e^{ -\i \ell \la^{-2}  \adjoint ( H_{\sys})  } $ is an isometry on $\scrB_1(\scrH_\sys)$, we deduce from \eqref{eq: decay scale} that

\beq
\norm \tilde W_{\ell} \e^{\frt  L }- \tilde W_{\ell} \str \rho_\sys^{L} \rangle \langle 1 \str \,  \norm  \leq \e^{- \frt/ \frt_L}, \qquad  \textrm{for} \, \frt > \frt_L
\eeq
 Since $L$ commutes with $\adjoint (H_\sys)$, we also have  $ \tilde W_{\ell}  \rho_\sys^{L}  =  \rho_\sys^{L} $. 
Hence, to check Assumption   \ref{ass: dissipativity}, we merely need to choose $\la$ small enough  such that
\beq
 T_{\la,\ell} -   \tilde W_{\ell} \e^{\ell L} 
\eeq
is a sufficiently small perturbation of  $\tilde W_{\ell}\e^{\ell L} $. In that case,  spectral perturbation theory of isolated eigenvalues applies and we find  that  $T_{\la, \ell} $ has  a unique maximal eigenvalue (which, by unitarity, is necessarily equal to $1$)  corresponding to the eigenvector 
$ \str \rho_\sys^T(\ell)  \rangle \langle 1 \str$ where $ \rho_\sys^{L}-\rho_\sys^T(\ell) = o(\str\la\str^0)$, as $\la \to 0$.  The appearance of $\langle 1\str$ and the fact that $\rho_\sys^T(\ell) $ is  a density matrix,  are consequences of the fact that  the dynamics preserves positivity and the trace. Moreover, 
\beq
\left\norm \left(T_{\la, \ell} \right)^n  -   \str \rho_\sys^T(\ell)  \rangle \langle 1 \str  \right\norm \leq   (1-g(\ell))^n
\eeq
where 
\beq   \label{eq: form of g ell}
1-g(\ell) = \e^{-\ell/\frt_L} +o(\str\la\str^0)  =  \exp{\{ -  \frac{\ell}{\frt_L+o(\str\la\str^0) }  \} }, \qquad \la \to 0 
\eeq
In the above  estimates,  the error term  $o(\str\la\str^0)$ depends on $\ell$ as well; in general, a smaller $\la$ is needed when $\ell$ grows.  We restrict  $\ell \in [\frt_L, 2 \frt_L]$, which allows us to make all estimates uniform in $\ell$. 

The conclusion is that, for $\la$ small enough,  Assumption  \ref{ass: dissipativity}  holds for $T=T_{\la,\ell}$,  with $\ell \in [\frt_L, 2 \frt_L]$, 
$g(\ell)$ as in \eqref{eq: form of g ell} and $C_g=1$.

\section{Proof of Theorem  \ref{thm: main} and Proposition \ref{prop: speed of convergence}}

Finally, we weld the abstract results of Section \ref{sec: general} to the setup presented in Section \ref{sec: introduction}, using the estimates of Section \ref{sec: discretization}.

\subsection{Theorem  \ref{thm: main}}
First, we choose $\la$ small enough such that 
\begin{itemize}
\item  Assumption  \ref{ass: dissipativity} holds with $T=T_{\la,\ell}$  for $\ell \in [ \frt_L, 2\frt_L]$ and such that $g$ can be chosen uniformly in $\ell$. The possibility of doing this was sketched above in Section \ref{sec: dissipativity}. 

\item Proposition \ref{prop: integrability} applies with $\ep(\la)$ small enough such that  Assumption \ref{ass: integrability} holds and Theorem  \ref{thm: main discrete} applies. 
%
%(In fact, this requirement depends also on $g(\ell)$ because the small parameter in Theorem  \ref{thm: main discrete} is $\ep/g$.  To be precise, we choose $\la$ small enough such that $\ep/(g(\ell))$,  for $\ell \in [ \frt_L, 2\frt_L]$,  is small enough for Theorem  \ref{thm: main discrete}.)
\end{itemize}
Let us abbreviate the reduced dynamics $V_t := P \e^{-\i t \adjoint (H_\la)}P$, then  Theorem  \ref{thm: main discrete} yields that
\beq \label{eq: convergence w}
V_{N \ell \la^{-2}}   \quad  \mathop{\longrightarrow}\limits_{N \to \infty}    \quad    \str \rho_{\sys}^{inv}(\ell) \rangle \langle 1 \str. 
\eeq
Moreover, we  deduce that $\rho_{\sys}^{inv}(\ell)$ is continuous in $\ell$.  This follows from 
the absolute summability of the series for $Z_{\infty}$ (by \eqref{eq: bound on delta}), uniformly in $\ell$, and the continuity of $ \bbE^c( B(A) )$ and $T $ as functions of $\ell$ (which follows easily from bounds on the Dyson expansion). 

Next, we argue  that $\rho_{\sys}^{inv}(\ell)$ does not depend on $\ell$. Indeed for $\ell_1, \ell_2$ such that $\ell_1/\ell_2$ is rational, we can pick a subsequence of  $(V_{N \ell_1 \la^{-2}})_N$ that is identical to a subsequence of  $(V_{N \ell_2 \la^{-2}})_N$. Hence the limits of both sequences coincide.  Since  $ \ell \rightarrow \rho_{\sys}^{inv}(\ell)$ is  continuous, it is necessarily constant. 
This ends the proof of Theorem  \ref{thm: main}.  \qed
\subsection{Proposition \ref{prop: speed of convergence}}

We start from \eqref{eq: bound on decay to inv} where we put $C_g=1$ (as argued in Section \ref{sec: dissipativity}, this is possible)
\beq
 \left\norm  V_{N \ell \la^{-2}}    -    \str \rho_\sys^{inv} \rangle\langle 1   \str  \right \norm  \leq   (1-g(\ell))^N + {\cal O}(\ep) \frac{1}{\zeta(N)},  \qquad   \ep \to 0   \label{eq: bound on decay to inv repeated}
\eeq
Since any $t >\la^{-2}\frt_L $ can be written as $t= \la^{-2} N \ell$ for some $N >0$ and $\ell \in [ \frt_L, 2\frt_L] $, we obtain 
\beq
 \left\norm  V_{t}    -   \str \rho_\sys^{inv} \rangle\langle 1   \str  \right \norm  \leq   \exp{\{ -  \frac{\la^2 t}{\frt_L+o(\str\la\str^0) }  \} }  + {\cal O}(\ep(\la))\frac{1}{\zeta(\la^2 t /\ell)},  \qquad   \la \to 0   
\eeq
where we used \eqref{eq: form of g ell}.  After using that $\ep \to 0$ as $\la \to 0$, and $\zeta(\la^2 t /\ell ) \geq \zeta(\la^2 t /(2 \frt_L) )$, we obtain Proposition \ref{prop: speed of convergence}.   \qed

\bibliographystyle{plain}
\bibliography{mylibrary08}

\end{document}